\newif\ifcomments
\newcommand{\comment}[1]{\ifcomments{\color{blue}\texttt{{[}COMMENT:} {#1} \texttt{END-COMMENT{]}}}\fi}
\newcommand{\Mod}{\textup{Mod}}
\newcommand{\SWAP}{\textit{SWAP}}
\newcommand{\tpmod}[1]{{\@displayfalse\pmod{#1}}}
\theoremstyle{plain}
\newtheorem{Thm}{Theorem}[section]
\theoremstyle{definition}
\title{Quantum Fanout and GHZ states using spin-exchange interactions}
\author{Stephen Fenner\thanks{Computer Science and Engineering Department, Columbia, SC 29208.  \url{fenner@cse.sc.edu}, \url{rwosti@email.sc.edu}} \\ University of South Carolina
\and
Rabins Wosti$\!\,^*$ \\ University of South Carolina}
\begin{document}
\maketitle


\begin{abstract}
We show how the fanout operation on $n$ logical qubits can be implemented via spin-exchange (Heisenberg) interactions between $2n$ physical qubits, together with a physical target qubit and $1$- and $2$-qubit gates in constant depth.  We also show that the same interactions can be used to implement $\Mod_q$ gates for any $q>1$.  These results allow for unequal coupling strengths between physical qubits.  This work generalizes an earlier result by Fenner \& Zhang \cite{Fenner_Zhang}, wherein the authors showed similar results assuming all pairwise couplings are equal.  The current results give exact conditions on the pairwise couplings that allow for this implementation. Precisely, each logical qubit is encoded into two physical qubits. Couplings between physical qubits encoding the same logical qubit are termed as internal couplings and couplings between the ones encoding different logical qubits are termed as external couplings.
We show that for a suitable time $T$ of evolution, the following conditions should hold: a) every external coupling should be an odd integer multiple of $\pi/2T$; b) every internal coupling should be an integer multiple of $\pi/T$; and c) the external magnetic strength in $z$-direction should be an integer multiple of $\pi/T$. Since generalized GHZ (``cat'') states can be created in constant depth using fanout, the same interactions can be used to create these states.
\end{abstract}

\noindent\textbf{Keywords:}
constant-depth quantum circuit;
quantum fanout gate;
Hamiltonian;
pairwise interactions;
spin-exchange interaction;
Heisenberg interaction;
modular arithmetic.

\section{Introduction}
The accurate computation of advanced quantum algorithms like Shor's integer factorization, quantum phase estimation (QPE), and the quantum Fourier transform (QFT) requires quantum circuits of considerable size and depth.  It is difficult to achieve reliable computation with deep quantum circuits due to the limited coherence times of the current noisy quantum devices.  The quantum fanout gate is known to be a powerful primitive for reducing the depth of many quantum circuits~\cite{Hoyer_Spalek,Gottesman}.  Shallow or constant-depth quantum circuits are desirable for both near-term and fault-tolerant quantum computations as they reduce noise and allow faster execution of quantum algorithms, potentially skirting the effects of short coherence times.  The $n$-qubit quantum fanout gate $F_n$ for $n \geq 1$ is the $(n+1)$-qubit unitary operator that copies the classical bit value of the control qubit into $n$ target qubits as shown below:
$$
F_n\ket{x_1, \ldots, x_n, c} = \ket{x_1 \oplus c, \ldots, x_n \oplus c, c},
$$
for all $x_1, \ldots, x_n, c \in \{0,1\}$. When the control qubit $\ket{c}$ is in the state $\alpha\ket{0} + \beta\ket{1}$, fanout can be used to construct ``cat'' states of the form $\alpha\ket{0}^{\otimes n+1} + \beta\ket{1}^{\otimes n+1}$. When $\alpha = \beta = \frac{1}{\sqrt 2}$, these states are called GHZ states. The $n$-qubit parity gate $P_n$ for $n \geq 1$ is the $(n+1)$-qubit unitary operator such that $P_n\ket{x_1, \ldots, x_n, t} = \ket{x_1, \ldots, x_n, t \oplus x_1 \oplus \cdots \oplus x_n}$ for all $x_1, \ldots, x_n, t \in \{0,1\}$. It was shown by Moore \cite{Moore} that the fanout gate can be converted into a parity gate in constant depth by conjugating with a bank of Hadamard ($H$) gates and vice-versa, that is, $H^{\otimes (n+1)}F_nH^{\otimes (n+1)} = P_n$. The fanout gate can be implemented by a $O(\log n)$-depth circuit with $O(n)$ many CNOTs. However, a constant-depth implementation of fanout for creating long-range entanglement among the qubits can significantly reduce the circuit depth of several quantum algorithms: a) one can approximate QPE, QFT, sorting, and arithmetic operations in constant depth and polynomial size \cite{Hoyer_Spalek}; b) since QFT forms a subroutine in the Shor's factoring algorithm, one can execute the entirety of Shor's algorithm in constant depth; c) one can exactly implement $n$-qubit threshold gates, unbounded AND-gates (generalized Toffoli gates), and OR-gates in constant depth \cite{Takahashi}.  Since GHZ states are widely used as resource states in quantum error correction protocols, e.g., to read out the error syndrome in Shor's code \cite{Shor}, and quantum teleportation \cite{Gottesman}, fanout proves helpful in reducing the associated circuit depth.  A key advantage of fanout is that it allows any set of commuting unitary operators to be simultaneously diagonalized and applied in parallel in constant depth and polynomial size, even if those operators act on the same qubits \cite{Hoyer_Spalek}.  Recently, quantum fanout gates are shown applicable to construct constant-depth and polynomial-size quantum circuits to realize quantum memory devices, including quantum random access memory (QRAM) and quantum random access gates (QRAGs) \cite{Allcock, Rosenthal2021}.

However, there is mounting theoretical evidence that fanout gates cannot be implemented using conventional quantum circuits in small (sublogarithmic or constant) depth and small width, even if unbounded AND-gates are allowed \cite{Fang, Rosenthal2020}. More recently, there have been several constant-depth implementations of the fanout gate \cite{Pham, Burhman, Piroli, Baumer} using $O(n)$ ancilla qubits, however these are based on measurement-based quantum computation and classical feedback, which may be an inaccessible resource in certain near-term experimental systems \cite{Arute}, and also can introduce additional measurement overheads and errors.  These results suggest that perhaps implementing fast quantum fanout gates requires more unconventional approaches.

Recent literature shows that by rightly tuning the interactions among the qubits in a quantum Heisenberg model, one can simulate many-body quantum spin systems, which has led to discoveries of several interesting physical phenomena \cite{Salathe,Savary,Bernien,Jepsen,Nguyen}. The Heisenberg interactions also form the primitives to implement expressive multi-qubit gates \cite{Kivlichan} which are extensively used in quantum algorithms and quantum error correction. So, an alternate approach would be to evolve an $n$-qubit system according to a Hamiltonian that occurs in nature, along with a minimal number of traditional quantum gates. It was shown by Fenner \cite{Fenner} that fanout can be realized by using a simple Ising-like Hamiltonian of equal coupling strengths and was later generalized to unequal couplings by Fenner \& Wosti \cite{Fenner_Rabins}. Guo et al. \cite{Andrew} showed an implementation of an unbounded quantum fanout gate by using a sequence of CNOTs, which are produced by systematically evolving pairwise Hamiltonian terms that employ power-law interactions between the qubits. Fenner \& Zhang \cite{Fenner_Zhang} showed that, for any $n > 0$, if $n$ logical qubits are suitably encoded into $2n$ physical qubits (as spin-$1/2$ particles), then the Heisenberg interactions among $2n$ qubits can be used to exactly implement an $(n + 1)$-qubit fanout gate using a certain constant depth circuit. However, the coupling coefficients in the Heisenberg Hamiltonian considered in \cite{Fenner_Zhang} are assumed to be all equal, which is physically unrealistic given that in real physical systems, we expect the couplings to be stronger between the spins that are in close proximity than the ones spatially far apart. Furthermore, application of multi-qubit gates on distant qubits in the current large scale quantum computing architectures with limited interqubit connectivity  introduces overheads that counteract the speed-up benefits of using long-range entangling gates \cite{Monroe,Linke,Bapat,Childs}. So, using Hamiltonians with long-range unequal interactions can help in realizing these multi-qubit entangling gates in small depth. Some of these longe-range interactions occur naturally in physical systems like dipole dipole and van der Waals interactions between Rydberg atoms \cite{Saffman, Weimer}, dipole-dipole interactions between polar molecules \cite{Yan}, and between defect centers in diamond \cite{Yao}. To this end, in this current work, we generalize the results of \cite{Fenner_Zhang} and show that for all $n > 0$, one can exactly implement an $(n + 1)$-qubit parity gate and hence, equivalently in constant depth an $(n+1)$-qubit fanout gate, using a similar Heisenberg Hamiltonian but with unequal couplings,
\comment{Say that, the current work allows for logical qubits that are spatially separated.  Quantum dots, perhaps?}
and we give an exact characterization of which couplings are adequate to implement fanout via the same circuit. This possibly allows for the logical qubits to be spatially separated in experimental quantum systems, for example, quantum dots to realize fanout. This work resolves a question left open in \cite{Fenner_Zhang}. Additionally, we give a direct implementation of $\Mod_q$ gate for $q \geq 2$ using our modified Hamiltonian by generalizing the constraints on the couplings that we derive to implement the parity gate $(q=2)$.

\section{Heisenberg interactions}
The Heisenberg interaction describes the way particles in the same general location affect each other by the magnetic moments arising from their spin angular momenta. We assume that the physical qubits are implemented as spin-$1/2$ particles, with $\ket{0}$ being the spin-up state (in the positive $z$-direction) and $\ket{1}$ being the spin-down state (in the negative $z$-direction). Given a system of $n$ identical qubits (spins) labeled $1, \ldots, m$, we define
\begin{align}
&J_x = \frac{1}{2} \sum_{i=1}^n X_i \nonumber \\
&J_y = \frac{1}{2} \sum_{i=1}^n Y_i \nonumber \\
&J_z = \frac{1}{2} \sum_{i=1}^n Z_i, \label{z-observable}
\end{align}
where $X_i, Y_i,$ and $Z_i$ represent the three Pauli operators acting on the $i^{th}$ qubit. The three observables $J_x, J_y,$ and $J_z$ give the total spin of $n$ qubits in the $x$-, $y$-, and $z$-directions, respectively. Consequently, one can define weighted quadratic versions of $J_x, J_y,$ and $J_z$ as
\begin{align}
&K_x = \frac{1}{2}\sum_{1 \leq i<j \leq n} J_{ij} X_iX_j \nonumber \\
&K_y = \frac{1}{2} \sum_{1 \leq i<j \leq n} J_{ij} Y_iY_j \nonumber \\
&K_z = \frac{1}{2} \sum_{1 \leq i<j \leq n} J_{ij} Z_iZ_j, \nonumber
\end{align}
where each $J_{ij}$ represents the coupling strength between the $i^{th}$ and the $j^{th}$ spins.
A weighted analogue of the squared magnitude of the total spin angular momentum of the system is given by the observable
\begin{align}
J^2 & = K_x + K_y + K_z \nonumber \\
& = \frac{1}{2}\sum_{1 \leq i<j \leq n} J_{ij} (X_iX_j + Y_iY_j + Z_iZ_j) \label{initial_J2}
\end{align}
To account for an external magnetic field in the $z$-direction, for any real $g$, we define the Hamiltonian
\begin{equation}
H_g = -J^2 + gJ_z. \label{Hamiltonian}
\end{equation}
Unencoded computational basis states are generally not eigenstates of $H_g$.  In order to extract the parity of a computational basis state easily, we wish to encode it as an eigenstate of $H_g$.  We also wish to maximize the locality of the encoding by independently acting on groups of physical qubits that are as small as possible.  Furthermore, it is desirable to accomplish this by encoding the basis state as a simultaneous eigenstate of both $J^2$ and $J_z$.

Consider a $p$-bit binary string $x = x_1\cdots x_p$ for some integer $p \geq 2$. Define
\begin{align*}
C_0^x &:= \{ i\in [1,p] : x_i = 0\} \\
C_1^x &:= \{ i\in [1,p] : x_i = 1\},
\end{align*}
and let $w$ denote the Hamming weight of $x$, i.e., $w = |C_1^x|$. 
Define $\ket{x'} := \bigotimes_{i=1}^p \ket{x_i0}$, where every bit $x_i$ is paired with an ancilla qubit set to state $\ket{0}$.  For any variable $a \in [1,p]$, $a$ indexes the pair of qubits with $a_1 = 2a-1$ representing the first qubit and $a_2 = 2a$ representing the second qubit in the indexed pair. Consider an encoding unitary $E$, such as that depicted in Figure~\ref{fig:encoder}, that produces the following outputs:
\begin{align}
\ket{\psi_0} &:= E\ket{00} = \ket{00},   \nonumber \\
\ket{\psi_1} &:= E\ket{10} = (\ket{01} - \ket{10})/\sqrt 2.
\end{align}
\begin{figure}
\begin{center}
\begin{quantikz}
&\ctrl{1} & \gate{H}  & \ctrl{1} & \qw \\
&\targ{}  & \ctrl{-1} & \targ{}  & \qw
\end{quantikz}
\end{center}
\caption{A circuit implementing the encoder $E$.}\label{fig:encoder}
\end{figure}
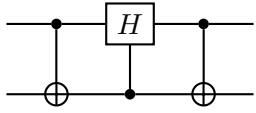
Let $n := 2p$.  The unitary $E^{\otimes p}$ encodes the state $\ket{x'}$ as the $n$-qubit state
\begin{equation} \label{encoded_state}
\ket{x_L} := E^{\otimes p}\ket{x'} = \bigotimes_{i=1}^p  \ket{\psi_{x_i}}. 
\end{equation}
$E$ is chosen to act on as few qubits as possible while preserving the Hamming weight of its input state, and it is evident that $\ket{x_L}$ is an eigenstate of $J_z$ with eigenvalue $p-w$.
\comment{$\ket{x_L}$ is also a spin state, annihilated by the raising operator.  Can we motivate this fact at all?}
The observable $J^2$ given by Eq.~(\ref{initial_J2}) can be re-written as,
\begin{align}
J^2 = & \eta I + \frac{1}{2}\sum_{1 \leq i < j \leq n} J_{ij} (I_iI_j + X_iX_j + Y_iY_j + Z_iZ_j) \nonumber \\
= & \eta I + \sum_{1 \leq i < j \leq n} J_{ij} \SWAP_{ij} \label{modified_J2}
\end{align}
where $\eta = - \frac{1}{2}\sum_{1 \leq i < j \leq n} J_{ij}$.
We are interested in finding the constraints on the couplings $J_{ij}$ so that the encoded state $\ket{x_L}$ is an eigenstate of $J^2$, i.e.,
\begin{equation} \label{eigen_equation}
J^2\ket{x_L} = \lambda_x \ket{x_L}
\end{equation}
for some real eigenvalue $\lambda_x$.  We call the coupling between two qubits of the same pair an \emph{internal coupling}, and the coupling between two qubits in distinct pairs an \emph{external coupling}. Solving Eq.~(\ref{eigen_equation}) for $\lambda_x$ gives two conclusions: 
\begin{enumerate} 
\item For any distinct qubit pairs $u,v \in [1,p]$, all the external couplings are equal, i.e. $J_{u_1v_1} = J_{u_1v_2} = J_{u_2v_1} = J_{u_2v_2}$.
\item The eigenvalue $\lambda_x$ is given in terms of coupling strengths as follows:
\begin{align}
\lambda_x &= 2\left(\sum_{r\in C_1^x} \sum_{t\in C_0^x} J_{r_1t_1} \right) \nonumber\\
&\mbox{} + 2\left(\sum_{\substack{r,s\in C_1^x\\r<s}} J_{r_1s_1} \right) + 4\left(\sum_{\substack{m,n\in C_0^x\\m<n}} J_{m_1n_1}\right) \nonumber \\
&\mbox{} + \left(\sum_{m\in C_0^x} J_{m_1m_2} \right) 
- \left(\sum_{r\in C_1^x} J_{r_1r_2}\right) \label{final_eigenvalue_eqn2}
\end{align}
Refer to Appendix~\ref{lambda_solution} for the detailed calculations.
\end{enumerate}


We show that requiring the eigenvalue $\lambda_x$ of $J^2$ to depend only on the Hamming weight of the input string $x$ reduces the couplings among $n$ qubits to only two (possibly unequal) internal and external couplings, which leads to an unrealistic constraint on the Hamiltonian (see Theorem~\ref{Thm_Hamming_weight} and its proof in Appendix~\ref{Hamming_weight_dependence}). In the next section, we relax this requirement to obtain a looser constraint on the couplings.
\section{Relaxing the Hamming weight dependence of eigenvalues}
%
Consider our Hamiltonian $H_g$ as defined in Eq.~(\ref{Hamiltonian}). 
Noting that
\begin{align*}
(Z_1 + Z_2) \ket{\psi_0} &= 2 \ket{\psi_0}, \\
(Z_1 + Z_2) \ket{\psi_1} &= 0,
\end{align*}
we have
\begin{equation}
J_z \ket{x_L} = (p-wt(x))\ket{x_L}.
\end{equation}
Therefore,
\begin{align}
H_g \ket{x_L} &= \left(-\lambda_x + g(p-wt(x)) \right) \ket{x_L} \nonumber \\
&= \delta_x \ket{x_L}, \label{eqn:H_g-again}
\end{align}
where
\begin{equation}\label{eqn:delta_x}
\delta_x = -\lambda_x + g(p-wt(x)).
\end{equation}
Let $U = e^{-iTH_{g}}$ for some $T$ of evolution.
Then, $U \ket{x_L} = e^{-iTH_{g}}\ket{x_L} = e^{-iT\delta_x}\ket{x_L}$.

We will use the circuit in Figure~\ref{fig:parity} to implement parity.
\begin{figure*}
\begin{center}
\begin{quantikz} [column sep=0.5cm, row sep={0.8cm,between origins}, font=\small]
\lstick{$x_1$} &\qw & \gate[2]{E} & \gate[wires = 7, nwires = 3]{U} & \qw & \qw & \qw & \qw & \qw & \qw & \qw & \gate[wires = 7, nwires = 3]{U'} & \gate[wires = 2]{E^\dagger} & \qw & \rstick{} \qw \\
\lstick{\ket{0}} & \qw &  &  & \qw & \qw & \qw & \qw & \qw & \qw & \qw &  & &\qw & \rstick{\ket{0}}\qw \\
\lstick{} & & \lstick{$\vdots$} & & & & & & & & & &\lstick{$\vdots$} & & \\
\lstick{$x_{p-1}$} & \qw & \gate[2]{E} &  & \qw & \qw & \qw & \qw & \qw & \qw & \qw &  & \gate[wires = 2]{E^\dagger} &\qw & \rstick{} \qw \\
\lstick{\ket{0}} & \qw &  &  & \qw & \qw & \qw & \qw & \qw & \qw & \qw &  & & \qw & \rstick{\ket{0}}\qw \\
\lstick{$x_p$} & \gate{H} & \gate[2]{E} &  & \gate[wires=2]{E^\dagger} & \gate{V} & \gate{H} & \ctrl{2}  & \gate{H} & \gate{V^\dagger} & \gate[wires=2]{E} &  & \gate[wires = 2]{E^\dagger} & \gate{H} & \rstick{} \qw \\
\lstick{\ket{0}} & \qw &  &  & \qw & \qw & \qw & \qw & \qw & \qw & \qw &  & & \qw  & \rstick{\ket{0}}\qw \\
& \qw & \qw & \qw & \qw & \qw & \qw & \targ{} & \qw & \qw & \qw & \qw & \qw & \qw  & \rstick{}\qw
\end{quantikz}
\end{center}
\caption{Circuit to implement parity (up to a global phase factor) using spin-exchange interactions.  The $p$ qubits carrying the input string $x$ are the \emph{input qubits}, and the last qubit is the target.  The unitary $U'$ matches $U^\dagger$ (up to a global phase factor) on the subspace spanned by encoded inputs (see Section~\ref{sec:U-dagger}).}\label{fig:parity}
\end{figure*}
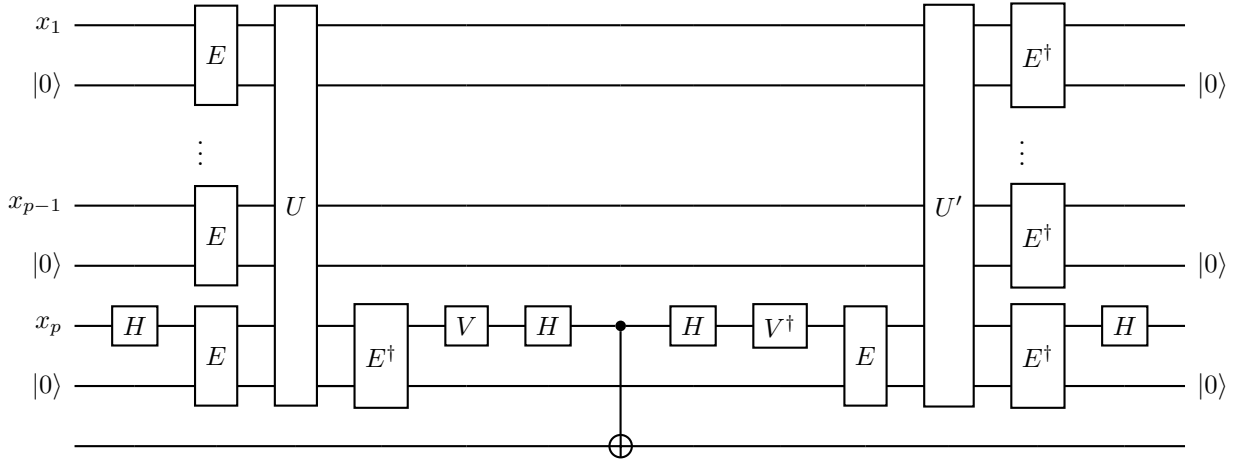
This circuit is similar to one used previously to implement parity where all interqubit couplings were equal~\cite{Fenner_Zhang}.  By running the current circuit, we derive constraints on the couplings sufficient to implement parity, revealing that a wide variety of unequal couplings suffice.

Running the circuit from left to right, we first apply the Hadamard gate to the $p^{th}$  unencoded input qubit, which we will call the \emph{active qubit}.  Then we couple each input qubit with an ancilla qubit set to $\ket{0}$ and apply the encoding unitary $E^{\otimes p}$. The resulting state is given by
\begin{align}
& E (\ket{x_1}\otimes \ket{0}) \otimes \cdots \otimes E(H\ket{x_p} \otimes \ket{0}) \nonumber \\
&= \ket{\psi_{x_1}} \otimes \cdots \otimes \ket{\psi_{x_{p-1}}} \otimes \left(\frac{E\ket{00} + (-1)^{x_p}E\ket{10}}{\sqrt2}\right) \nonumber \\
&= \ket{\psi_{x_1}} \otimes \cdots \otimes \ket{\psi_{x_{p-1}}} \otimes \left(\frac{\ket{\psi_0} + (-1)^{x_p}\ket{\psi_1}}{\sqrt2}\right) \nonumber \\
&= \frac{\ket{\psi_{x_1}} \otimes \cdots \otimes \ket{\psi_{x_{p-1}}} \otimes \ket{\psi_0}}{\sqrt2} \ + \nonumber \\
& \qquad \frac{(-1)^{x_p}\ket{\psi_{x_1}} \otimes \cdots \otimes \ket{\psi_{x_{p-1}}} \otimes \ket{\psi_1}}{\sqrt2}.
 \label{state_before_hamiltonian}
\end{align}
Applying the unitary operator $U = e^{-iTH_{g}}$ to the state given by Eq.~(\ref{state_before_hamiltonian}), we get
\begin{align}
\frac{e^{-iT\delta_u}}{\sqrt2}\ket{u_L} + \frac{(-1)^{x_p}e^{-iT\delta_v}}{\sqrt2}\ket{v_L}, \label{applied_Hamiltonian}
\end{align}
where $u = x_1x_2\ldots x_{p-1}0$ and $v = x_1x_2\ldots x_{p-1}1$.
The state shown above can be re-written as 
\begin{align}
\ket{\psi_{x_1}} &\otimes \cdots \otimes \ket{\psi_{x_{p-1}}} \nonumber\\
&\mbox{} \otimes \left(\frac{e^{-iT\delta_u}\ket{\psi_0} + (-1)^{x_p}e^{-iT\delta_v}\ket{\psi_1}}{\sqrt 2}\right)\label{state_after_encoding}
\end{align}
Then, $wt(u) = wt(v) - 1$, $C_1^{u} = C_1^{x} - \{p\}$, and $C_0^{u} = C_0^{x} \cup \{p\}$.  From 
Eq.~(\ref{final_eigenvalue_eqn2}),
\begin{align}
\lambda_u - \lambda_v = 2\left(\sum_{t \in C_0^{v}} J_{p_1t_1} \right) + 2J_{p_1p_2}
\end{align}
The eigenvalue $\delta_v$ of the Hamiltonian $H_{g}$ for the string $v$ is given by
\begin{align}
\delta_v &= -\lambda_v + g(p-wt(v)) \nonumber \\
&= 2\sum_{t \in C_0^{v}} J_{p_1t_1} + 2J_{p_1p_2} - \lambda_u + g(p-wt(u) - 1) \nonumber \\
&= 2\sum_{t \in C_0^{v}} J_{p_1t_1} + 2J_{p_1p_2} -g - \lambda_u + g(p-wt(u)) \nonumber \\
&= \delta_u + 2\sum_{t \in C_0^{v}} J_{p_1t_1} + 2J_{p_1p_2} -g \nonumber \\
&= \delta_u + c(v) + c(g,p), \label{delta_v_in_u}
\end{align}
where $c(v) = 2\sum_{t \in C_0^{v}} J_{p_1t_1}$ and $c(g,p) = \\ 2J_{p_1p_2} -g$.
So, using Eq.~(\ref{delta_v_in_u}) in Expression~(\ref{state_after_encoding}), the state of the $p^{th}$ pair is equal to 
\begin{align}
& e^{-iT\delta_u}\;\frac{\ket{\psi_0} + (-1)^{x_p}e^{-iTc(v)}e^{-iTc(g,p)}\ket{\psi_1}}{\sqrt 2} \nonumber 
\end{align}
Using $E^\dagger$ to decode the $p^{th}$ pair of qubit gives
\begin{align}
& e^{-iT\delta_u}\left(\frac{\ket{0} + (-1)^{x_p}e^{-iTc(v)}e^{-iTc(g,p)}\ket{1}}{\sqrt 2} \right) \otimes \ket{0} \nonumber \\
\end{align}
Applying the unitary operator $W$ = $\begin{bmatrix}
1 & 0 \\
0 & e^{iTc(g, p)}
\end{bmatrix}$ to the $(n-1)^{th}$ qubit gives
\begin{align}
& e^{-iT\delta_u}\left(\frac{\ket{0} + (-1)^{x_p}e^{-iTc(v)}\ket{1}}{\sqrt 2} \right) \otimes \ket{0} \nonumber \\
& e^{-iT\delta_u}\left(\frac{\ket{0} + e^{-i(\pi x_p + Tc(v))}\ket{1}}{\sqrt 2} \right) \otimes \ket{0} \nonumber \\
\end{align}
Now, applying a Hadamard to the $(n-1)^{th}$ qubit gives
\begin{align}
& e^{-iT\delta_u}\left(\frac{1+ e^{-i(\pi x_p + Tc(v))}}{2}\right)\ket{0} + e^{-iT\delta_u}\left(\frac{1- e^{-i(\pi x_p + Tc(v))}}{2}\right)\ket{1} \nonumber \\
&= e^{-iT\delta_u}e^{-i(\pi x_p + Tc(v))/2}\left(\cos\left(\frac{\pi x_p + Tc(v)}{2}\right)\ket{0} 
+ i\sin\left(\frac{\pi x_p + Tc(v)}{2}\right)\ket{1} \right) \nonumber \\
&= e^{-i(T\delta_u+\phi)}\left(\cos(\phi)\ket{0} + i\sin(\phi)\ket{1}\right) \,, \label{before_cnot}
\end{align}
where $\phi = \left(\frac{\pi x_p + Tc(v)}{2}\right)$.

We want $\sin(\phi) = 0$ when the parity is $0$, and $\cos(\phi) = 0$ when the parity is $1$.
Fix a string $x$ such that $x_1 = 0$ and $x_2\ldots y_{p-1} = 1 \ldots 1$. Then, $c(v) = 2J_{p_1,1_1}$. 
For all real $a, b, \alpha$ with $\alpha \neq 0$, we use $a \equiv_\alpha b$ to mean $(a-b)/\alpha$ is an integer.

If $p$ is odd and parity is $0$, then $x_p = 1$. Then, $\phi = \frac{\pi}{2} + TJ_{p_1,1_1}$. We want $\sin(\phi) = 0 \implies
\phi \equiv_\pi 0 \implies TJ_{p_1,1_1} \equiv_\pi  \frac{\pi}{2}$.

If $p$ is odd and parity is $1$, then $x_p = 0$. Then, $\phi = TJ_{p_1,1_1}$. We want $\cos(\phi) = 0 \implies \phi \equiv_\pi \frac{\pi}{2} \implies TJ_{p_1,1_1} \equiv_\pi \frac{\pi}{2}$. 
\\

If $p$ is even, apply an $X$ operator to the $(n-1)^{th}$ qubit, whose state is given in Eq.~(\ref{before_cnot}). Then the state is given by 
\begin{align}
& e^{-i(T\delta_u+\phi)}\left(\cos(\phi)\ket{1} + i\sin(\phi)\ket{0} \right) \otimes \ket{0}. \label{after X}
\end{align}

We want $\cos(\phi) = 0$ when the parity is $0$, and $\sin(\phi) = 0$ when the parity is $1$. 

If $p$ is even and parity is $0$, then $x_p = 0$. Then, $\phi = TJ_{p_1,1_1}$. Thus, $\cos(\phi) = 0 \implies
\phi \equiv_\pi \frac{\pi}{2} \implies TJ_{p_1,1_1} \equiv_\pi  \frac{\pi}{2}$.

If $p$ is even and parity is $1$, then $x_p = 1$. Then, $\phi = \frac{\pi}{2} + TJ_{p_1,1_1}$. Thus, $\sin(\phi) = 0 \implies
\phi \equiv_\pi 0 \implies TJ_{p_1,1_1} \equiv_\pi  \frac{\pi}{2}$.

Therefore, in all cases, we get the same constraint $TJ_{p_1,1_1} \equiv_\pi  \frac{\pi}{2}$. Notice that we could change the string $x$ such that $x_e = 0$ for a given $e \in [1, p-1]$, and $x_f = 1$ for any $f \in [1, p-1]$ and $e \neq f$. So, 
\begin{equation} \label{looser_constraint}
TJ_{p_1,e_1} \equiv_\pi  \frac{\pi}{2}
\end{equation} for any $e \in [1, p-1]$. 
Under this constraint,
\begin{align}
\frac{Tc(v)}{2} & =  T\sum_{t \in C_0^v} J_{p_1t_1} \equiv_{\pi} TJ_{p_1,1_1}(p-1-wt(u)). \label{c(v)_equiv_eqn}
\end{align}
So, $\phi \equiv_\pi \frac{\pi}{2}x_p + \frac{\pi}{2}(p-1-wt(u)) \equiv_\pi \frac{\pi}{2} (x_p + wt(u)) + \frac{\pi}{2}(p-1) \equiv_\pi \frac{\pi}{2} wt(x) + \frac{\pi}{2}(p-1)$. For odd $p$ and even parity, $\phi \equiv_\pi 0$. So, from Eq.~(\ref{before_cnot}), the state of the $(n-1)^{th}$ qubit is proportional to $\ket{0}$, and applying the CNOT gate with the $(n-1)^{th}$ qubit as the control, the parity bit can be copied onto a fresh ancilla qubit initally set to $\ket{0}$. Similarly, one can verify for odd $p$ and odd parity. For even $p$, with the application of $X$ operator, one can verify that the circuit shown in Figure~\ref{fig:parity} works for both even and odd parity using Eq.~(\ref{after X}).

Also, since $HX = ZH$, we can choose to apply $Z$ operator to the $(n-1)^{th}$ qubit before applying the Hadamard in case $p$ is even. Set $q \equiv_2 p+1$. So, we can apply the unitary operator $V = WZ^{q}$ , which is given by $\begin{bmatrix}
1 & 0 \\
0 & (-1)^{q}e^{iTc(g, p)}
\end{bmatrix}$, to the $(n-1)^{th}$ qubit as shown in Figure~\ref{fig:parity}. Lastly, we apply a CNOT gate to copy the parity bit to a fresh ancilla qubit. Notice that the extra phase factor $e^{-iT\delta_u}e^{-i\frac{(\pi x_p + Tc(v))}{2}}$ gets cancelled by the rest of the circuit, which uses a unitary $U'$ (see Section~\ref{sec:U-dagger}, below). Finally, we apply a bank of Hadamard gates $H^{\otimes (n+1)}$ on each side of the circuit shown in Figure~\ref{fig:parity} to obtain the fanout gate.

A more flexible implementation of parity would allow \emph{any} of the input qubits to be the active qubit, not necessarily the $p^{th}$.  For this, the constraints on the coupling coefficients would require
\begin{equation} \label{stronger_constraint}
TJ_{f_1,\ell_1} \equiv_\pi \frac{\pi}{2}
\end{equation}
for every distinct $f,\ell \in [1, p]$. We will use these stronger constraints in the following section to implement $U'$ that matches the inverse $U^\dagger$ on the subspace spanned by the encoded inputs $\ket{x_L}$.

\section{Inverse of the unitary $U$} \label{sec:U-dagger}
To make the circuit clean, we must implement a unitary $U'$ that matches $U^\dagger$ on the subspace spanned by the $\ket{x_L}$, up to a global phase factor.  The naturally easiest way to do this is to evolve the same Hamiltonian $H_g$ for some time $T'$.  This will impose additional constraints on the couplings.  We solve the following equation for an arbitrary $p$-bit string $x$ and some real $\theta$:
\begin{equation} \label{inverse_unitary}
e^{-iT^\prime\delta_x}e^{-iT\delta_x} \ket{x_L} = e^{-i\theta} \ket{x_L}.
\end{equation}
where $\theta$ is some real number independent of $x$.
Let $u$ denote the Hamming weight of the string $x$. So, Eq.~(\ref{inverse_unitary}) becomes
\begin{align}
&(T^\prime + T)\delta_x \equiv_{2\pi} \theta \nonumber \\
&(T^\prime + T)(-\lambda_x + g(p-u)) \equiv_{2\pi} \theta \nonumber \\
&(T^\prime + T)(\lambda_x + gu) \equiv_{2\pi} (T+T')gp - \theta = \theta'\;,\label{phase_independent_x}
\end{align}
where $\theta'$ does not depend on $x$.

We will assume the following additional constraints:
\begin{enumerate}
\item $T = kT'$ for some odd integer $k$.
\item $T'J_{f_1,\ell_1} \equiv_\pi \frac{\pi}{2}$ for every distinct $f,\ell \in [1, p]$.
\item For each $f \in [1,p]$, \ $T'J_{f_1f_2} \equiv_{\pi} 0$. 
\item The parameter $g$ controlling the external magnetic field 
is such that $T'g \equiv_{\pi} 0$.
\end{enumerate}

Under these additional constraints, we get
\begin{align*} \label{T_lambda2}
& (T+T')(\lambda_x + gu) \\
& \equiv_{2\pi} (k+1)T'\lambda_x + (k+1)T'gu \\ 
& \equiv_{2\pi} (k+1)\pi \left( {p \choose 2} + {p-u \choose 2}\right) + (k+1)T'c_x \\ 
& \qquad + (k+1)T'gu \\
& \equiv_{2\pi} (k+1)\pi \left( {p \choose 2} + {p-u \choose 2}\right) + (k+1)\pi k' \\
& \qquad + (k+1)\pi k''u \\
& \equiv_{2\pi} 0.
\end{align*}
The first line of congruence follows from constraint no. 1 between $T$ and $T'$. The second line of congruence follows from constraint no. 2 on the external couplings, and the quantity $c_x$ that appears is the term that only includes internal couplings and depends on $x$. Precisely,
\begin{equation} \label{eqn_cx}
c_x = \left(\sum_{m \in C_0^{x}} J_{m_1m_2} \right) - \left(\sum_{r \in C_1^{x}} J_{r_1r_2}\right).
\end{equation} 
See Eq.~(\ref{final_eigenvalue_eqn2}) for clarity. The third line of congruence follows from the last two constraints on the internal couplings and the external magnetic field. The integers $k'$ and $k''$ arise from the last two constraints. Thus, under these additional constraints, Eq.~(\ref{phase_independent_x}) is satisfied.

\section{Implementing the $\text{Mod}_q$ gate}
In this section, we will show how to directly implement $\Mod_q$ gate using the Hamiltonian $H_g$ given by Eq.~(\ref{Hamiltonian}) for $q \geq 2$. It is already known that $\Mod_q$ gates for all $q \geq 2$ are constant-depth equivalent to each other \cite{Green}, however our approach here is much more direct. The $\Mod_q$ is a classical gate that acts on $r$ control bits and a target bit. The target bit is flipped iff the Hamming weight of the control bits is not a multiple of $q$. We will simulate a more powerful version of this gate, the \textit{generalized} $\Mod_q$ \textit{gate}, which has $r$ control bits and $q-1$ target bits $t_1, \ldots, t_{q-1}$. If $w$ is the Hamming weight of the control bits, then the target bits $t_1, \ldots, t_i$ are all flipped, where $i = w \tpmod {q}$, and the other target bits are left alone. Figure~\ref{fig:generalized_mod} shows how to simulate a (standard) $\Mod_q$ gate with a circuit using two generalized $\Mod_q$ gates and a CNOT gate.

\begin{figure*}
\begin{center}
\begin{quantikz} [column sep=0.5cm, row sep={0.8cm,between origins}, font=\small]
        &  \ctrl{3} & \qw \\
\vdots  &           & \vdots   \\
        & \ctrl{5}  & \qw \\
& & \\
& & \\
& & \\
& & \\
& \gate[1]{q}  & \qw
\end{quantikz}
=
\begin{quantikz} [column sep=0.5cm, row sep={0.8cm,between origins}, font=\small]
                   & \ctrl{5} & \qw & \ctrl{5} & \qw \\
           \vdots  &          &     &          & \vdots \\
                   & \ctrl{4} & \qw & \ctrl{4} & \qw \\
\lstick{$\ket{0}$}   & \gate[4,nwires={3}]{q} & \ctrl{4} & \gate[4,nwires={3}]{q} & \rstick{$\ket{0}$} \qw \\
\lstick{$\ket{0}$} & \qw & \qw & \qw & \rstick{$\ket{0}$} \qw \\
\vdots & & & & \lstick{$\vdots$} & \\
\lstick{$\ket{0}$} & \qw & \qw & \qw & \rstick{$\ket{0}$} \qw \\
& \qw & \targ{} & \qw &  \qw 
\end{quantikz}
\end{center}
\caption{Simulating a standard $\Mod_q$ gate with generalized $\Mod_q$ gates. There are $r$ control qubits and the ancillae qubits on the right are the target qubits of the generalized $\Mod_q$ gate which are labeled $t_1, t_2, \ldots, t_{q-1}.$} \label{fig:generalized_mod}
\end{figure*}
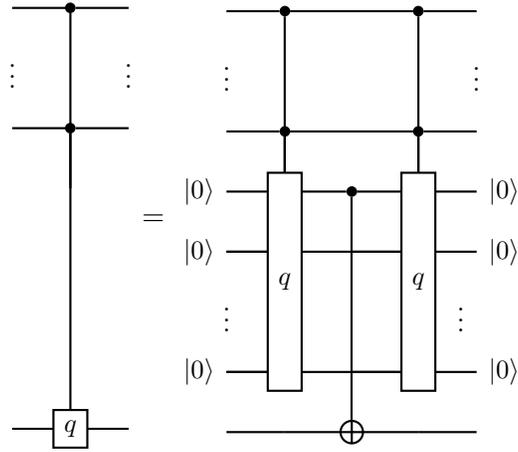

\begin{figure*}
\begin{center}
\begin{quantikz} [column sep=0.5cm, row sep={0.7cm,between origins}, font=\small]
\ghost{X}& \ctrl{2} & \qw \\
\vdots  &          & \vdots \\
        & \ctrl{4} & \qw \\
        &          & \\
        &          & \\
        &          & \\
        & \gate[3,nwires={2}]{q} &\qw \\
\vdots  &          & \vdots \\
        &          & \qw
\end{quantikz}
=
\begin{quantikz} [column sep=0.5cm, row sep={0.7cm,between origins}, font=\small]
&\gate[6,nwires={2,5}]{E}&\gate[6,nwires={2,5}]{U}&\gate[6,nwires={2,5}]{E^\dagger}&\qw&\qw&\qw&\qw&\qw&\gate[6,nwires={2,5}]{E}&\gate[6,nwires={2,5}]{U'}&\gate[6,nwires={2,5}]{E^\dagger}&\qw \\
\vdots &&&&&&\vdots&&&&&& \vdots \\
&&&&\qw&\qw&\qw&\qw&\qw&&&& \qw \\
\lstick[wires=3]{$\ket{\mathbb{A}}$}  & &&&\gate[3,nwires={2}]{R} &\ctrl{3}&\qw&\qw&\gate[3,nwires={2}]{R^\dagger}&\qw&&& \qw\rstick[wires=3]{$\ket{\mathbb{A}}$} \\
&\vdots&&&&&\ddots&&&&&& \vdots& \\
&&&&\qw&\qw&\qw&\ctrl{3}&&&&& \qw \\
& \qw & \qw & \qw & \qw &\targ{}& \qw & \qw & \qw & \qw & \qw & \qw &\qw \\
& &\vdots &&&&\ddots&&&& \vdots&& \\
&\qw & \qw & \qw & \qw &\qw& \qw & \targ{} & \qw &\qw &\qw &\qw & \qw
\end{quantikz}
\end{center}
\caption{Circuit to implement generalized $\Mod_q$ gate using spin-exchange interactions} \label{fig: generalized_Mod_q_implementation}
\end{figure*}
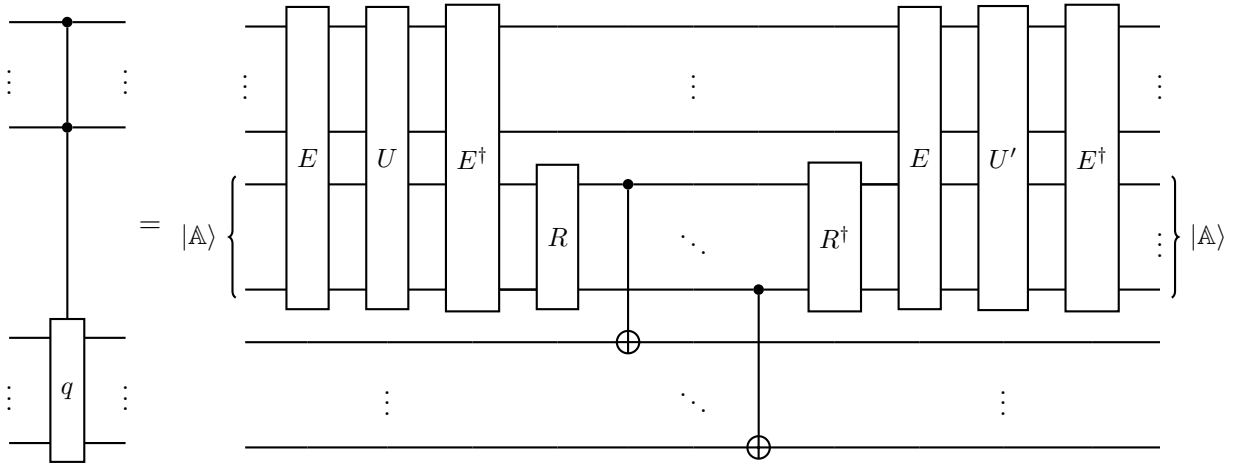

We prepare the $q-1$ ancillae qubits into the state
\begin{align}
\ket{\mathbb{A}} = \sum_{j=0}^{q-1}h_j \ket{A_j},
\end{align}
where $\ket{A_j} = \ket{1^j0^{q-1-j}}$ and $|h_j|^2 = \frac{1}{q}$ for all $j$.
Let $\mathcal{H}$ denote the Hilbert space of a qubit. Consider a $p-$qubit computational basis state $\ket{x}$ for $p \geq 2$ with Hamming weight $u$. We encode the state $\ket{x}\otimes\ket{\mathbb{A}}$ with our encoder $E$ by using one ancilla qubit for each data qubit in the state $\ket{x}\otimes\ket{\mathbb{A}}$ to produce the encoded state $\ket{(x\mathbb{A})_L} \in \{\ket{\psi_0}, \ket{\psi_1}\}^{\otimes p+q-1}$. Then, we turn on the Hamiltonian $H_g = -J^2 + gJ_z$ for time period $T$ such that 
\begin{equation} \label{mod_q_constraint}
2TJ_{f_1\ell_1} \equiv_{2\pi} \frac{2\pi k}{q},
\end{equation}
for every distinct $f,\ell \in [1,p]$ and for an integer $k > 0$ co-prime to $q$. Notice that this new constraint recovers the constraint given in Eq.~(\ref{stronger_constraint}) that we obtained for parity $(q=2)$. Under this new constraint on the coupling strengths and Eq.~(\ref{final_eigenvalue_eqn2}), notice that 
\begin{equation} \label{T_lambda}
T\lambda_x \equiv_{2\pi} \frac{2\pi k}{q} {p \choose 2} + \frac{2\pi k}{q} {p-u \choose 2} + Tc_x,
\end{equation}
where $c_x$, as given in Eq.~(\ref{eqn_cx}), is the term that only includes internal couplings and depends on $x$. 
So, $U\ket{(x\mathbb{A})_L} = e^{-iTH_g}\ket{(x\mathbb{A})_L} = $ 
\begin{align} 
 & \sum_{j=0}^{q-1} h_j e^{-iT(-\lambda_{xA_j} + g(p+q-1 - u -j))}\ket{(xA_j)_L} \label{encoded_modq}
\end{align}
Also, we can write
\begin{align}
T\lambda_{xA_j} 
 & \equiv_{2\pi} \frac{2\pi k}{q} {z \choose 2} 
+ \frac{2\pi k}{q} {z-u-j \choose 2} + Tc_{xA_j}, \nonumber
\end{align}
for $z = p+q-1$.
Notice that the string $xA_j$ is the concatenation of the strings $x$ and $A_j$. So, $c_{xA_j} = c_x + c_{A_j}$.
Next, we decode the state shown in Eq.~(\ref{encoded_modq}) using $(E^\dagger)^{\otimes z}$. Doing so will restrict the information necessary to compute the residue of modular $q$ into the $q-1$ ancilla qubits. The state of the $q-1$ qubits is then given by 
\begin{align}
\ket{\Phi_u} = \sum_{j=0}^{q-1} h_je^{-iT(-\lambda_{xA_j} + g(z - u-j))}\ket{1^j0^{q-j-1}}
\end{align}
Notice that all the states of the form $\ket{\Phi_u}$ lie in a $q$-dimensional subspace $\mathcal{H}^\prime$ of $\mathcal{H}^{\otimes (q -1)}$, spanned by $\{\ket{A_j} | 0 \leq j < q\}$. For any other $p$-qubit computational basis state $\ket{y}$ with Hamming weight $v$, we want to show that $\braket{\Phi_v}{\Phi_u} = 0$ when $u \not\equiv_{q} v$. So,
\begin{align}
& \braket{\Phi_v}{\Phi_u} = \sum_{j=0}^{q-1} |h_j|^2 e^{iT(\lambda_{xA_j} - \lambda_{yA_j} + gu - gv)} \nonumber \\
& \propto \ \frac{1}{q} \sum_{j=0}^{q-1} e^{i(T\lambda_{xA_j} - T\lambda_{yA_j} )} \nonumber \\
& = \frac{1}{q} \sum_{j=0}^{q-1} e^{i\left(\frac{2\pi k}{q} \left({z-u-j \choose 2} - {z-v-j \choose 2}\right) + Tc_{xA_j} - Tc_{yA_j} \right)} \nonumber \\
& = \frac{1}{q} \sum_{j=0}^{q-1} e^{i\left(\frac{2\pi k}{q} {z-u-j \choose 2} - \frac{2\pi k}{q} {z-v-j \choose 2} + Tc_{x} - Tc_{y} \right)} \nonumber \\
& \propto \ \frac{1}{q} \sum_{j=0}^{q-1} e^{i\frac{2\pi k}{q}\left( {z-u-j \choose 2} - {z-v-j \choose 2} \right)} \nonumber \\
& \propto \ \frac{1}{q} \sum_{j=0}^{q-1} e^{\frac{i2\pi jk}{q}(u - v)} \nonumber \\
& = \delta_{u \tpmod{q}, \ v \tpmod{q}},
\end{align}
 where $\delta_{w,o}$ is the Kronecker delta.
 
Thus, there exists an orthonormal basis $\{\ket{\phi_{u \tpmod {q}}} | \ u \in \mathbb{N} \}$ of $\mathcal{H}^\prime$ such that each state $\ket{\Phi_u} = e^{i\theta_u}\ket{\phi_{u \tpmod {q}}}$ for some $\theta_u \in \mathbb{R}$. Then, there exists a unitary operator $R$ that maps each state $\ket{\phi_j}$ to the state $\ket{A_j}$. We then use a CNOT to copy the $j^{th}$ ancilla qubit value into the $j^{th}$ target qubit $t_j$. Lastly, we apply the inverse of the previous computations to get rid of conditional phase factors, as shown in Figure~\ref{fig: generalized_Mod_q_implementation}.

In order to implement $U'$ that matches the inverse $U^\dagger$ (up to a global phase factor) on the subspace spanned by the encoded inputs $\ket{x_L}$, we impose similar additional constraints on the Hamiltonian $H_g$ as before (see Section~\ref{sec:U-dagger}):
\begin{enumerate}
\item For each $f \in [1,p], TJ_{f_1f_2} \equiv_{2\pi} \frac{2\pi k^\prime}{q}$ for some integer $k^\prime$. 
\item The parameter $g$ controlling the external magnetic field
is such that $Tg \equiv_{2\pi} \frac{2\pi k^{\prime \prime}}{q}$ for some integer $k^{\prime \prime}$.
\end{enumerate}
Under these additional constraints, in order to implement $U'$, we can evolve the Hamiltonian $H_g$ for time of evolution $T' = k'''T$, where $k'''$ is any integer such that $k''' \equiv_q q-1$.

\section{Conclusion}
In this work, we give a constant-depth implementation of fanout and $\Mod_q (q \geq 2)$ gate by evolving the system of $n (\geq 4)$ physical qubits according to a Heisenberg Hamiltonian $H_g$ (Eq.~(\ref{Hamiltonian})) and provide the exact constraints on the couplings among the qubits that allow for this implementation. Our results assume that the Heisenberg Hamiltonians of the form $H_g$ are easy enough to implement in lab so as to easily produce the respective unitaries required for parity and $\Mod_q$ gates. 

There are some interesting questions still left open in our current work. To realize the quantum fanout gate, we encode each logical qubit into the spin states $\ket{\psi_0}$ and $\ket{\psi_1}$ of two physical qubits, yielding a physical-to-logical qubit number ratio as 2. It would be interesting to see if one can group multiple logical qubits together and produce a compressed encoding that has physical-to-logical qubit number ratio as close to unity as possible, perhaps by encoding the logical qubits into some other spin states. Producing a compressed encoding would allow a reduction in the number of required ancilla qubits (and hence, the circuit width) for implementing the fanout gate, but may require a larger encoding circuit that acts on distant qubits rather than just the nearest-neighbors. It would also be interesting to see what other multi-qubit entangling gates can be implemented in constant- or shallow-depth quantum circuits using evolution of Hamiltonians that naturally arise in physical systems of interest. We used an XXX isotropic Heisenberg Hamiltonian. However, in real physical systems, the spin-spin interactions can be anisotropic due to surface and interface effects, along with spin-orbit coupling \cite{Lidar}. So, it would be interesting to derive the constraints on the coupling strengths via anisotropic XXZ or XYZ models. One could also consider Hamiltonians where the interactions among the qubits satisfy inverse power laws to explore the restrictions on the spatial arrangements of spin qubits in certain experimental systems, for example, the location of spin qubits in semiconductor quantum dots, to implement fanout.
\bibliographystyle{IEEEtran}
\bibliography{new_heisenberg_modified}

\appendix
\section{Solving eigenvalue in terms of coupling strengths (Eq.~(\ref{eigen_equation}))} \label{lambda_solution}
Without loss of generality, throughout the proof, we assume that the p-bit string $x$ starts with the $1$'s followed by the $0$'s, i.e., $C_1^x = \{1, \ldots, w \}$ and $C_0^x = \{w+1, \ldots, p \}$ where $w$ is the Hamming weight of $x$. We also consider a more general encoder $E$ that encodes as follows: 
\begin{align}
\ket{\psi_0} &:= E\ket{00} = \ket{00},   \nonumber \\
\ket{\psi_1} &:= E\ket{10} = \alpha\ket{01} + \beta\ket{10} \text{for some} \ \alpha, \beta \in \mathbb{C} \ \ \text{such that} \ |\alpha|^2 + |\beta|^2 = 1.
\end{align}
The sum $\sum_{1 \leq i < j \leq n} J_{ij} \SWAP_{ij}$ in Eq.~(\ref{modified_J2}) can be split into four smaller sums:
\begin{enumerate}
\item 
$ \sum_{1 \leq r < s \leq w} J_{r_1s_1} \SWAP_{r_1s_1} + J_{r_1s_2} \SWAP_{r_1s_2} + J_{r_2s_1} \SWAP_{r_2s_1} + J_{r_2s_2} \SWAP_{r_2s_2}  $ 
\vspace{0.15cm}
\item $ \sum_{1 \leq r \leq w, 2w+1 \leq t \leq n} J_{r_1t} \SWAP_{r_1t} + J_{r_2t} \SWAP_{r_2t}  $
\vspace{0.15cm}
\item $ \sum_{1 \leq r \leq w} J_{r_1r_2} \SWAP_{r_1r_2} $
\vspace{0.15cm}
\item $ \sum_{2w+1 \leq t_1 < t_2 \leq n} J_{t_1t_2} \SWAP_{t_1t_2} $
\end{enumerate}
We evaluate each individual sum applied to the encoded state $\ket{x_L}$. Starting with the first sum, we get
\begin{align} 
\sum_{1 \leq r < s \leq w} & \bigl(J_{r_1s_1} \SWAP_{r_1s_1} + J_{r_1s_2} \SWAP_{r_1s_2} + J_{r_2s_1} \SWAP_{r_2s_1} + J_{r_2s_2} \SWAP_{r_2s_2} \bigr) \ket{\psi_1}^{\otimes w} \ket{\psi_0}^{\otimes (p-w)} \nonumber \\ 
= \sum_{1 \leq r < s \leq w} & \bigl(J_{r_1, s_1} \SWAP_{r_1, s_1} + 
J_{r_1s_2} \SWAP_{r_1s_2} + J_{r_2s_1} \SWAP_{r_2s_1} + J_{r_2s_2} \SWAP_{r_2s_2} \bigr)  \nonumber \\
& \left(\alpha^2\ket{\psi_1^{\otimes(r-1)}01\psi_1^{\otimes(s-r-1)}01\psi_1^{\otimes (w-s)}\psi_0^{\otimes (p-w)}} \right. \nonumber \\ 
& \qquad \left. + \alpha\beta\ket{\psi_1^{\otimes(r-1)}01\psi_1^{\otimes(s-r-1)}10\psi_1^{\otimes (w-s)}\psi_0^{\otimes (p-w)}} \right. \nonumber \\ 
& \qquad \left. + \alpha\beta\ket{\psi_1^{\otimes(r-1)}10\psi_1^{\otimes(s-r-1)}01\psi_1^{\otimes (w-s)}\psi_0^{\otimes (p-w)}} \right. \nonumber \\ 
& \qquad \left. + \beta^2\ket{\psi_1^{\otimes(r-1)}10\psi_1^{\otimes(s-r-1)}10\psi_1^{\otimes (w-s)}\psi_0^{\otimes (p-w)}} \right) \nonumber \\ 
= \sum_{1 \leq r < s \leq w} & J_{r_1s_1} \left(\alpha^2\ket{\cdots01\cdots01\cdots} \right. \nonumber \\ 
& \qquad \left. + \alpha\beta\ket{\cdots11\cdots00\cdots} \right. \nonumber \\  
& \qquad \left. + \alpha\beta\ket{\cdots00\cdots11\cdots} \right. \nonumber \\ 
& \qquad \left. + \beta^2\ket{\cdots10\cdots10\cdots} \right) \nonumber \\ 
& + J_{r_1s_2} \left(\alpha^2\ket{\cdots11\cdots00\cdots} \right. \nonumber \\ 
& \qquad \left. + \alpha\beta\ket{\cdots01\cdots10\cdots} \right. \nonumber \\ 
& \qquad \left. + \alpha\beta\ket{\cdots10\cdots01\cdots} \right. \nonumber \\ 
& \qquad \left. + \beta^2\ket{\cdots00\cdots11\cdots} \right) \nonumber \\ 
& + J_{r_2s_1} \left(\alpha^2\ket{\cdots00\cdots11\cdots} \right. \nonumber \\ 
& \qquad \left. + \alpha\beta\ket{\cdots01\cdots10\cdots} \right. \nonumber \\ 
& \qquad \left. + \alpha\beta\ket{\cdots10\cdots01\cdots} \right. \nonumber \\ 
& \qquad \left. + \beta^2\ket{\cdots11\cdots00\cdots} \right) \nonumber \\ 
& + J_{r_2s_2} \left(\alpha^2\ket{\cdots01\cdots01\cdots} \right. \nonumber \\ 
& \qquad \left. + \alpha\beta\ket{\cdots00\cdots11\cdots} \right. \nonumber \\  
& \qquad \left. + \alpha\beta\ket{\cdots11\cdots00\cdots} \right. \nonumber \\ 
& \qquad \left. + \beta^2\ket{\cdots10\cdots10\cdots} \right) \label{constr_0}
\end{align}

The states $\ket{\psi_1^{\otimes(r-1)}00\psi_1^{\otimes(s-r-1)}11\psi_1^{\otimes (w-s)}\psi_0^{\otimes (p-w)}}$ and \\
$\ket{\psi_1^{\otimes(r-1)}11\psi_1^{\otimes(s-r-1)}00\psi_1^{\otimes (w-s)}\psi_0^{\otimes (p-w)}}$ are linearly independent with all other states on L.H.S. and do not appear on R.H.S. of Eq.~(\ref{eigen_equation}). So, equating their coefficients to zero, we get
\begin{align}
J_{r_1s_1}\alpha\beta + J_{r_1s_2}\beta^2 + J_{r_2s_1}\alpha^2 + J_{r_2s_2}\alpha\beta = 0 \label{constr_1} \\ 
J_{r_1s_1}\alpha\beta + J_{r_1s_2}\alpha^2 + J_{r_2s_1}\beta^2 + J_{r_2s_2}\alpha\beta = 0 \label{constr_2}
\end{align}
Equating Eqs.~\ref{constr_1} and \ref{constr_2}, we get
\begin{equation}
(J_{r_1s_2} - J_{r_2s_1}) (\alpha^2 - \beta^2) = 0 \label{constr_3}
\end{equation}
Adding Eqs.~\ref{constr_1} and \ref{constr_2}, we get
\begin{equation}
2\alpha\beta(J_{r_1s_1} + J_{r_2s_2}) + (\alpha^2 + \beta^2)(J_{r_1s_2} + J_{r_2s_1}) = 0 \label{constr_4}
\end{equation}

Eq.~(\ref{constr_0}) can be rewritten as follows,
\begin{align}
= \sum_{1 \leq r < s \leq w} & \bigl(J_{r_1s_1} + J_{r_2s_2}\bigr)\alpha^2\ket{\psi_1^{\otimes(r-1)}01\psi_1^{\otimes(s-r-1)}01\psi_1^{\otimes (w-s)}\psi_0^{\otimes (p-w)}}  \nonumber \\
& + \bigl(J_{r_1s_2} + J_{r_2s_1}\bigr)\alpha\beta\ket{\psi_1^{\otimes(r-1)}01\psi_1^{\otimes(s-r-1)}10\psi_1^{\otimes (w-s)}\psi_0^{\otimes (p-w)}} \nonumber \\ 
& + \bigl(J_{r_1s_2} + J_{r_2s_1}\bigr)\alpha\beta\ket{\psi_1^{\otimes(r-1)}10\psi_1^{\otimes(s-r-1)}01\psi_1^{\otimes (w-s)}\psi_0^{\otimes (p-w)}} \nonumber \\ 
& + \bigl(J_{r_1s_1} + J_{r_2s_2}\bigr)\beta^2\ket{\psi_1^{\otimes(r-1)}10\psi_1^{\otimes(s-r-1)}10\psi_1^{\otimes (w-s)}\psi_0^{\otimes (p-w)}}  \label{constr_a}
\end{align}

By evaluating the second sum, we get
\begin{align} \nonumber
= \sum_{1 \leq r \leq w, \ 2w+1 \leq t \leq n} & \bigl(J_{r_1t} \SWAP_{r_1t} + J_{r_2t} \SWAP_{r_2t}\bigr) \ket{\psi_1}^{\otimes w} \ket{\psi_0}^{\otimes (p-w)} \\ \nonumber
= \sum_{1 \leq r \leq w, \ 2w+1 \leq t \leq n} & \bigl(J_{r_1t} \SWAP_{r_1t} + J_{r_2t} \SWAP_{r_2t} \bigr) \\ \nonumber
& \left(\alpha\ket{\psi_1^{\otimes(r-1)}01\psi_1^{\otimes(w-r)}0^{t-2w-1}00^{n-t}} \right. \\ \nonumber
& \left. + \beta\ket{\psi_1^{\otimes(r-1)}10\psi_1^{\otimes(w-r)}0^{t-2w-1}00^{n-t}} \right) \\ \nonumber 
= \sum_{1 \leq r \leq w, \ 2w+1 \leq t \leq n} & J_{r_1t} \left(\alpha\ket{\psi_1^{\otimes(r-1)}01\psi_1^{\otimes(w-r)}0^{t-2w-1}00^{n-t}} \right. \\ \nonumber
& + \left. \beta\ket{\psi_1^{\otimes(r-1)}00\psi_1^{\otimes(w-r)}0^{t-2w-1}10^{n-t}} \right) \\ \nonumber
& + J_{r_2t} \left(\alpha\ket{\psi_1^{\otimes(r-1)}00\psi_1^{\otimes(w-r)}0^{t-2w-1}10^{n-t}} \right. \\ \nonumber
& + \left. \beta\ket{\psi_1^{\otimes(r-1)}10\psi_1^{\otimes(w-r)}0^{t-2w-1}00^{n-t}} \right) \\ \nonumber
= \sum_{1 \leq r \leq w, \ 2w+1 \leq t \leq n} & J_{r_1t}\alpha\ket{\psi_1^{\otimes(r-1)}01\psi_1^{\otimes(w-r)}0^{t-2w-1}00^{n-t}} \\ \nonumber
& + J_{r_2t}\beta\ket{\psi_1^{\otimes(r-1)}10\psi_1^{\otimes(w-r)}0^{t-2w-1}00^{n-t}} \\ \nonumber
& + (J_{r_1t}\beta + J_{r_2t}\alpha)\ket{\psi_1^{\otimes(r-1)}00\psi_1^{\otimes(w-r)}0^{t-2w-1}10^{n-t}} \\ \nonumber
= \sum_{1 \leq r \leq w} \ket{\psi_1^{\otimes(r-1)}} \otimes & \left(\left(\sum_{2w+1 \leq t \leq n} J_{r_1t}\alpha \right) \ket{01} + \left(\sum_{2w+1 \leq t \leq n} J_{r_2t}\beta \right)\ket{10}\right) \otimes \ket{\psi_1^{\otimes(w-r)}} \otimes \ket{\psi_0}^{\otimes (p-w)} \nonumber \\ 
+ \sum_{1 \leq r \leq w, \ 2w+1 \leq t \leq n} & \left(J_{r_1t}\beta + J_{r_2t}\alpha\right)\ket{\psi_1^{\otimes(r-1)}00\psi_1^{\otimes(w-r)}0^{t-2w-1}10^{n-t}}. \label{constr_5}
\end{align}
Notice that for every $t \in [2w+1, n]$, the state \ket{\psi_1^{\otimes(r-1)}00\psi_1^{\otimes(w-r)}0^{t-2w-1}10^{n-t}} in the above Eq.~(\ref{constr_5}) has the $t^{th}$ qubit set to $1$. However, on the R.H.S. of Eq.~(\ref{eigen_equation}), all the qubits in the positions $2w+1$ through $n$ are set to $0$. Therefore, it implies that
\begin{equation} \label{constr_6}
J_{r_1t}\beta = -J_{r_2t}\alpha,
\end{equation}
for all $r \in [1,w]$ and $t \in [2w+1, n]$. However, notice that if the original $p$-bit string $x$ has the $r^{th}$ bit set to $1$, then Eq.~(\ref{constr_6}) holds true for all $r \in [1,w]$ and $t \neq \{r_1, r_2\}$. In general, for two distinct pairs $u,v \in [1,p]$, following equations hold true:
\begin{align}
& \beta J_{u_1v_1} + \alpha J_{u_2v_1} = 0 \label{pair1} \\
& \beta J_{u_1v_2} + \alpha J_{u_2v_2} = 0 \label{pair2} \\
& \beta J_{v_1u_1} + \alpha J_{v_2u_1} = 0 \label{pair3} \\
& \beta J_{v_1u_2} + \alpha J_{v_2u_2} = 0 \label{pair4}
\end{align}
$\mbox{Eq.~(\ref{pair1})} - \mbox{Eq.~(\ref{pair3})}$ gives
\begin{align}
\alpha J_{u_2v_1} = \alpha J_{v_2u_1},
\end{align}
where we use the symmetricity of the couplings.
Since $|\alpha|^2 + |\beta|^2 = 1$, WLOG we can assume that $\alpha \neq 0$. So, we get
\begin{align}
J_{u_2v_1} = J_{v_2u_1} \label{r_1s_2withr_2s_1}
\end{align}
Then, we can get following relations:
\begin{align}
J_{u_1v_2} = J_{u_2v_1} = -\frac{\beta}{\alpha}J_{u_1v_1} \nonumber \\
J_{u_2v_2} = -\frac{\beta}{\alpha}J_{u_1v_2} = \frac{\beta^2}{\alpha^2}J_{u_1v_1} \label{r1s1andr2s2}
\end{align}
Eq.~(\ref{constr_5}) can now be rewritten as
\begin{align}
\sum_{1 \leq r \leq w} \ket{\psi_1^{\otimes(r-1)}} \otimes & \left(\left(\sum_{2w+1 \leq t \leq n} J_{r_1t}\alpha \right) \ket{01} + \left(\sum_{2w+1 \leq t \leq n} J_{r_2t}\beta \right)\ket{10}\right) \otimes \ket{\psi_1^{\otimes(w-r)}} \otimes \ket{\psi_0}^{\otimes (p-w)}. \label{constr_d}
\end{align}
Using Eq.~(\ref{r1s1andr2s2}) in Eq.~(\ref{constr_a}), we get
\begin{align}
= \sum_{1 \leq r < s \leq w} & \bigl(\alpha^2 + \beta^2\bigr)J_{r_1s_1}\ket{\psi_1^{\otimes(r-1)}01\psi_1^{\otimes(s-r-1)}01\psi_1^{\otimes (w-s)}\psi_0^{\otimes (p-w)}}  \nonumber \\
& + 2J_{r_1s_2}\alpha\beta \left(\ket{\psi_1^{\otimes(r-1)}01\psi_1^{\otimes(s-r-1)}10\psi_1^{\otimes (w-s)}\psi_0^{\otimes (p-w)}}\right. \nonumber \\ 
& + \left. \ket{\psi_1^{\otimes(r-1)}10\psi_1^{\otimes(s-r-1)}01\psi_1^{\otimes (w-s)}\psi_0^{\otimes (p-w)}}\right) \nonumber \\ 
& + \left(\alpha^2 + \beta^2 \right)\frac{\beta^2}{\alpha^2}J_{r_1s_1}\ket{\psi_1^{\otimes(r-1)}10\psi_1^{\otimes(s-r-1)}10\psi_1^{\otimes (w-s)}\psi_0^{\otimes (p-w)}}  \label{constr_b}
\end{align}
Further substituting $\gamma_{rs} = \left(\alpha^2 + \beta^2 \right)J_{r_1s_1}$ in Eq.~(\ref{constr_b}), we get
\begin{align}
= \sum_{1 \leq r < s \leq w} & \gamma_{rs}\left(\ket{\psi_1^{\otimes(r-1)}01\psi_1^{\otimes(s-r-1)}01\psi_1^{\otimes (w-s)}\psi_0^{\otimes (p-w)}} \right.  \nonumber \\
& + \left. \frac{\beta^2}{\alpha^2}\ket{\psi_1^{\otimes(r-1)}10\psi_1^{\otimes(s-r-1)}10\psi_1^{\otimes (w-s)}\psi_0^{\otimes (p-w)}} \right) \nonumber \\
& + 2J_{r_1s_2}\alpha\beta \left(\ket{\psi_1^{\otimes(r-1)}01\psi_1^{\otimes(s-r-1)}10\psi_1^{\otimes (w-s)}\psi_0^{\otimes (p-w)}}\right. \nonumber \\ 
& + \left. \ket{\psi_1^{\otimes(r-1)}10\psi_1^{\otimes(s-r-1)}01\psi_1^{\otimes (w-s)}\psi_0^{\otimes (p-w)}}\right) \nonumber \\ 
= \sum_{1 \leq r  \leq w} & \ket{\psi_1^{\otimes(r-1)}}\otimes \ket{01} \otimes \left(\sum_{ s>r} \ket{\psi_1}^{\otimes(s-r-1)} \otimes
\left(\gamma_{rs}\ket{01}+2\alpha\beta J_{r_1s_2} \ket{10}\right) \otimes \ket{\psi_1}^{\otimes (w-s)}\right)\otimes\ket{\psi_0}^{\otimes (p-w)} \nonumber \\ 
& + \ket{\psi_1^{\otimes(r-1)}}\otimes \ket{10} \otimes \left(\sum_{s>r} \ket{\psi_1}^{\otimes(s-r-1)} \otimes
\left(2\alpha\beta J_{r_1s_2}\ket{01}+ \gamma_{rs}\frac{\beta^2}{\alpha^2}\ket{10}\right) \otimes \ket{\psi_1}^{\otimes (w-s)}\right)\otimes\ket{\psi_0}^{\otimes (p-w)} \label{constr_c}
\end{align}

By evaluating the third sum, we get
\begin{align} 
= \sum_{1 \leq r \leq w} & J_{r_1r_2} \SWAP_{r_1r_2} \ket{\psi_1}^{\otimes w} \ket{\psi_0}^{\otimes (p-w)} \nonumber \\ 
= \sum_{1 \leq r \leq w} & J_{r_1r_2} \SWAP_{r_1r_2} \ket{\psi_1}^{\otimes (r-1)} \otimes (\alpha\ket{01}+\beta\ket{10})\otimes
\ket{\psi_1}^{\otimes (w-r)} \otimes \ket{\psi_0}^{\otimes (p-w)} \nonumber \\
= \sum_{1 \leq r \leq w} & J_{r_1r_2} \ket{\psi_1}^{\otimes (r-1)} \otimes (\alpha\ket{10}+\beta\ket{01})\otimes
\ket{\psi_1}^{\otimes (w-r)} \otimes \ket{\psi_0}^{\otimes (p-w)} \label{constr_7}
\end{align} 
Adding Eqs.~(\ref{constr_d}) and (\ref{constr_7}) gives
\begin{align}
\sum_{1 \leq r \leq w} \ket{\psi_1^{\otimes(r-1)}} \otimes & \left(\left(J_{r_1r_2}\beta + \sum_{2w+1 \leq t \leq n} J_{r_1t}\alpha \right) \ket{01}
 + \left(J_{r_1r_2}\alpha + \sum_{2w+1 \leq t \leq n} J_{r_2t}\beta \right)\ket{10}\right) \otimes \ket{\psi_1^{\otimes(w-r)}} \otimes \nonumber \\\ket{\psi_0}^{\otimes (p-w)} \label{constr_e}
\end{align}
Adding Eqs.~(\ref{constr_c}) and (\ref{constr_e}) gives
\begin{align}
\sum_{1 \leq r  \leq w} & \ket{\psi_1^{\otimes(r-1)}}\otimes \ket{01} \otimes \left(\left(\sum_{s>r} \ket{\psi_1}^{\otimes(s-r-1)} \otimes
\left(\gamma_{rs}\ket{01}+2\alpha\beta J_{r_1s_2} \ket{10}\right) \otimes \ket{\psi_1}^{\otimes (w-s)} \right)+ \right. \nonumber \\ 
& \left. \left(J_{r_1r_2}\beta + \sum_{2w+1 \leq t \leq n} J_{r_1t}\alpha \right) \ket{\psi_1^{\otimes(w-r)}} \right)
\otimes\ket{\psi_0}^{\otimes (p-w)} \nonumber \\ 
& + \ket{\psi_1^{\otimes(r-1)}}\otimes \ket{10} \otimes \left(\left(\sum_{s>r} \ket{\psi_1}^{\otimes(s-r-1)} \otimes
\left(2\alpha\beta J_{r_1s_2}\ket{01}+ \gamma_{rs}\frac{\beta^2}{\alpha^2}\ket{10}\right) \otimes \ket{\psi_1}^{\otimes (w-s)} \right)+ \right. \nonumber \\ 
& \left. \left(J_{r_1r_2}\alpha + \sum_{2w+1 \leq t \leq n} J_{r_2t}\beta \right) \ket{\psi_1^{\otimes(w-r)}} \right)
\otimes\ket{\psi_0}^{\otimes (p-w)} \label{constr_g}
\end{align}
Substituting $\rho_{r_1} = \left(J_{r_1r_2}\beta + \sum_{2w+1 \leq t \leq n} J_{r_1t}\alpha \right)$ and $\rho_{r_2} = \left(J_{r_1r_2}\alpha + \sum_{2w+1 \leq t \leq n} J_{r_2t}\beta \right)$ in Eq.~(\ref{constr_g}), we get
\begin{align}
\sum_{1 \leq r  \leq w} & \ket{\psi_1^{\otimes(r-1)}}\otimes \ket{01} \otimes \left(\left(\sum_{s>r} \ket{\psi_1}^{\otimes(s-r-1)} \otimes
\left(\gamma_{rs}\ket{01}+ 2\alpha\beta J_{r_1s_2} \ket{10}\right) \otimes \ket{\psi_1}^{\otimes (w-s)} \right)\right. \nonumber \\ 
& \left. + \rho_{r_1} \ket{\psi_1}^{\otimes (w-r)} \right)\otimes\ket{\psi_0}^{\otimes (p-w)} \nonumber \\ 
& + \ket{\psi_1^{\otimes(r-1)}}\otimes \ket{10} \otimes \left(\left(\sum_{s>r} \ket{\psi_1}^{\otimes(s-r-1)} \otimes
\left(2\alpha\beta J_{r_1s_2}\ket{01}+ \gamma_{rs}\frac{\beta^2}{\alpha^2}\ket{10}\right) \otimes \ket{\psi_1}^{\otimes (w-s)} \right)\right. \nonumber \\ 
&\left. + \rho_{r_2}\ket{\psi_1}^{\otimes (w-r)} \right) \otimes\ket{\psi_0}^{\otimes (p-w)} \label{constr_h}
\end{align}
Evaluating the fourth sum gives
\begin{align}
& \sum_{2w+1 \leq t_1 < t_2 \leq n} \left(J_{t_1t_2} \SWAP_{t_1t_2}\right) \ket{\psi_1}^{\otimes w}\otimes \ket{0}^{\otimes (n-2w)}  \nonumber \\
& = \left( \sum_{2w+1 \leq t_1 < t_2 \leq n} J_{t_1t_2} \right) \ket{\psi_1}^{\otimes w}\otimes \ket{\psi_0}^{\otimes (p-w)} \label{constr_f}
\end{align}
Finally, putting Eqs.~(\ref{constr_h}) and (\ref{constr_f}) together, Eq.~(\ref{eigen_equation}) can be re-written as
\begin{align}
\sum_{1 \leq r  \leq w} & \ket{\psi_1^{\otimes(r-1)}}\otimes \ket{01} \otimes \left(\left(\sum_{s>r} \ket{\psi_1}^{\otimes(s-r-1)} \otimes
\left(\gamma_{rs}\ket{01}+ 2\alpha\beta J_{r_1s_2} \ket{10}\right) \otimes \ket{\psi_1}^{\otimes (w-s)} \right)\right. \nonumber \\ 
& \left. + \rho_{r_1} \ket{\psi_1}^{\otimes (w-r)} \right)\otimes\ket{\psi_0}^{\otimes (p-w)} \nonumber \\ 
& + \ket{\psi_1^{\otimes(r-1)}}\otimes \ket{10} \otimes \left(\left(\sum_{s>r} \ket{\psi_1}^{\otimes(s-r-1)} \otimes
\left(2\alpha\beta J_{r_1s_2}\ket{01}+ \gamma_{rs}\frac{\beta^2}{\alpha^2}\ket{10}\right) \otimes \ket{\psi_1}^{\otimes (w-s)} \right)\right. \nonumber \\ 
&\left. + \rho_{r_2}\ket{\psi_1}^{\otimes (w-r)} \right) \otimes\ket{\psi_0}^{\otimes (p-w)} \nonumber \\
& = \left(\lambda- \sum_{2w+1 \leq t_1 < t_2 \leq n} J_{t_1t_2} \right) \ket{\psi_1}^{\otimes w}\otimes \ket{\psi_0}^{\otimes (p-w)}\label{master}
\end{align}

We perform a change of basis at this point. The new basis states are $\ket{\psi_0} = \ket{00}$, $\ket{\psi_1} = \alpha\ket{01} + \beta\ket{10}$, $\ket{\phi} = \beta^*\ket{01}-\alpha^*\ket{10}$ and $\ket{11}$. Therefore, the old states $\ket{01}$ and $\ket{10}$ can be transformed as follows:
\begin{align}
\ket{01} = \alpha^*\ket{\psi_1} + \beta\ket{\phi} \nonumber \\
\ket{10} = \beta^*\ket{\psi_1} - \alpha\ket{\phi}
\end{align}

Under the basis transformation, Eq.~(\ref{master}) can be written as
\begin{align}
\sum_{1 \leq r  \leq w} & \ket{\psi_1^{\otimes(r-1)}}\otimes (\alpha^*\ket{\psi_1} + \beta\ket{\phi}) \otimes \left(\left(\sum_{s>r} \ket{\psi_1}^{\otimes(s-r-1)} \otimes
\left(\gamma_{rs} (\alpha^*\ket{\psi_1} + \beta\ket{\phi}) \right. \right. \right. \nonumber \\ + \phantom{\sum_{a\le r\le b}}
 &\left. \left. \left. 2\alpha\beta J_{r_1s_2} (\beta^*\ket{\psi_1} - \alpha\ket{\phi})\right) \otimes \ket{\psi_1}^{\otimes (w-s)} \right) + \rho_{r_1} \ket{\psi_1}^{\otimes (w-r)}\right)\otimes\ket{\psi_0}^{\otimes (p-w)} \nonumber \\ 
&+ \ket{\psi_1^{\otimes(r-1)}}\otimes (\beta^*\ket{\psi_1} - \alpha\ket{\phi}) \otimes \left(\left(\sum_{s>r} \ket{\psi_1}^{\otimes(s-r-1)} \otimes
\left(2\alpha\beta J_{r_1s_2}(\alpha^*\ket{\psi_1} + \beta\ket{\phi}) \right. \right. \right. \nonumber \\
& \left. \left. \left. + \gamma_{rs}\frac{\beta^2}{\alpha^2}(\beta^*\ket{\psi_1} - \alpha\ket{\phi})\right) \otimes \ket{\psi_1}^{\otimes (w-s)} \right) + \rho_{r_2}\ket{\psi_1}^{\otimes (w-r)}\right)\otimes\ket{\psi_0}^{\otimes (p-w)} \nonumber \\
& = \left(\lambda- \sum_{2w+1 \leq t_1 < t_2 \leq n} J_{t_1, t_2} \right) \ket{\psi_1}^{\otimes w}\otimes \ket{\psi_0}^{\otimes (p-w)} \label{changed_master}
\end{align}
Notice that R.H.S. of the above Eq.~(\ref{changed_master}) is orthogonal to any state involving $\ket{\phi}$. Therefore, from L.H.S. of Eq.~(\ref{changed_master}), the coefficients associated with the states $\ket{\psi_1}^{\otimes(r-1)} \otimes \ket{\phi} \otimes \ket{\psi_1}^{\otimes(s-r-1)} \otimes \ket{\phi} \otimes \ket{\psi_1}^{\otimes(w-s)} \otimes \ket{\psi_0}^{\otimes(p-w)}$ for each distinct $(r,s)$ pair can be equated to zero. So, we get
\begin{align}
&\beta^2\gamma_{rs} - 2\alpha^2 \beta^2 J_{r_1s_2} - 2\alpha^2 \beta^2 J_{r_1s_2} + \gamma_{rs} \beta^2 = 0 \nonumber \\
&\beta^2\gamma_{rs} = 2\alpha^2\beta^2J_{r_1s_2} \nonumber \\ 
& \beta^2(\alpha^2 + \beta^2)J_{r_1s_1} = 2\alpha^2\beta^2J_{r_1s_2} \label{constr_i}
\end{align}
Using the relations $J_{r_2s_2} = \frac{\beta^2}{\alpha^2}J_{r_1s_1}$, $J_{r_1s_2} = J_{r_2s_1}$ in Eq.~(\ref{constr_4}), we have
\begin{align}
& \beta J_{r_1s_1}\left(\frac{\alpha^2 + \beta^2}{\alpha}\right) = -\left(\alpha^2 + \beta^2\right) J_{r_1s_2} \label{alpha_beta_ratio}
\end{align}
Multiplying Eq.~(\ref{alpha_beta_ratio}) by $\beta$ on both sides, we get 
\begin{equation}
\beta^2 J_{r_1s_1} \left(\frac{\alpha^2 + \beta^2}{\alpha}\right) = -\beta\left(\alpha^2 + \beta^2\right)J_{r_1s_2} \label{intermediate_eqn}
\end{equation}

Substituting the value of $\beta^2\left(\alpha^2 + \beta^2\right)J_{r_1s_1}$ from Eq.~(\ref{constr_i}) in Eq.~(\ref{intermediate_eqn}), we get
\begin{align}
& 2\alpha \beta^2 J_{r_1s_2} = -\beta\left(\alpha^2 + \beta^2\right)J_{r_1s_2} \nonumber \\
& \beta J_{r_1s_2} \left(\alpha + \beta \right)^2 = 0 \nonumber 
\end{align}
If we assume that $J_{r_1s_2} \neq 0$, we conclude that either $\beta = 0$ or $\alpha = -\beta$.
If $\beta = 0$, from Eq.~(\ref{constr_6}), $J_{r_2t} = 0$ for all $r \in [1,w]$ and $t \in [1, n] - \{ r_1, r_2 \}$. It then follows that $J_{r_2s_1} = J_{r_1s_2} = 0$, which contradicts our assumption that $J_{r_1s_2} \neq 0$. Therefore, $\beta$ cannot be zero. 
At this point, we assume that the couplings are non zero and we conclude that $\alpha = -\beta$.
Now we aim to deduce constraints on the eigenvalue $\lambda$ by equating the coefficients of the states that don't involve the state $\ket{\phi}$ from Eq.~(\ref{changed_master}).
\begin{align}
& \sum_{1 \leq r \leq w} \alpha^* \ket{\psi_1}^{\otimes r} \otimes\left(\rho_{r_1} + \sum_{s > r} \left(\gamma_{rs}\alpha^* + 2\alpha|\beta|^2 J_{r_1s_2} \right)\right) \ket{\psi_1}^{\otimes {w-r}}\otimes \ket{\psi_0}^{\otimes (p-w)} \nonumber \\
& \quad \quad + \beta^* \ket{\psi_1}^{\otimes r} \otimes \left(\rho_{r_2} + \sum_{s > r} \left(2|\alpha|^2 \beta J_{r_1s_2} + \gamma_{rs}|\beta|^2 \frac{\beta}{\alpha^2}\right)\right) \ket{\psi_1}^{\otimes {w-r}} \otimes \ket{\psi_0}^{\otimes (p-w)} \nonumber \\ 
& = \left(\lambda- \sum_{2w+1 \leq t_1 < t_2 \leq n} J_{t_1, t_2} \right) \ket{\psi_1}^{\otimes w} \otimes \ket{\psi_0}^{\otimes (p-w)} \nonumber \\
& \left(\sum_{1 \leq r \leq w} \rho_{r_1}\alpha^* + \rho_{r_2}\beta^*  + \sum_{1 \leq r \leq w} \sum_{s > r} 4 |\alpha|^2 |\beta|^2 J_{r_1s_2} + \left(1 + \frac{\beta^2}{\alpha^2}\right)\left(|\alpha|^4 + |\beta|^4\right)J_{r_1s_1} \right)\ket{\psi_1}^{\otimes w} \otimes \ket{\psi_0}^{\otimes (p-w)} \nonumber \\
& = \left(\lambda- \sum_{2w+1 \leq t_1 < t_2 \leq n} J_{t_1, t_2} \right) \ket{\psi_1}^{\otimes w} \otimes \ket{\psi_0}^{\otimes (p-w)} \label{lambda_equation}
\end{align}


When $\alpha = - \beta$ and $|\alpha|^2 + |\beta|^2 = 1$, $|\alpha|^2 = |\beta|^2 = \frac{1}{2}$. Then, from Eq.~(\ref{lambda_equation}), we get
\begin{align}
& \lambda = \left(\sum_{1 \leq r \leq w} \left(\rho_{r_1} - \rho_{r_2}\right)\alpha^* \right) +  \left(\sum_{1 \leq r \leq w} \sum_{s > r} J_{r_1s_1} + J_{r_1s_2} \right) + \left( \sum_{2w+1 \leq t_1 < t_2 \leq n} J_{t_1t_2} \right) \label{almost_final_eigen_eqn}
\end{align}
Notice that $\rho_{r_1} = \left(J_{r_1r_2}\beta + \sum_{2w+1 \leq t \leq n} J_{r_1t}\alpha \right)$ and $\rho_{r_2} = \left(J_{r_1r_2}\alpha + \sum_{2w+1 \leq t \leq n} J_{r_2t}\beta \right)$. So, when $\alpha = -\beta$, 
\begin{align}
\left(\rho_{r_1} - \rho_{r_2}\right)\alpha^* & = -2J_{r_1r_2}|\alpha|^2 + \sum_{2w+1 \leq t \leq n} (J_{r_1t} + J_{r_2t})|\alpha|^2 \nonumber \\
& = -J_{r_1r_2} + \frac{1}{2}\sum_{2w+1 \leq t \leq n} \left(J_{r_1t} + J_{r_2t}\right) \quad (\text{since} \ |\alpha|^2 = \frac{1}{2})
\end{align}
When $\alpha = - \beta$, following relations hold true among the couplings:
From Eq.~(\ref{r1s1andr2s2}), we get 
\begin{align}
J_{u_1v_1} = J_{u_2v_2}, \label{parallel_couplings}
\end{align}
and from Eqs.~\ref{pair1} and \ref{pair3}, we get
\begin{align}
& J_{u_1v_1} = J_{u_1v_2} \ \text{and} \label{r_1_eqn} \\
& J_{u_1v_1} = J_{u_2v_1} \label{s_1_eqn} \quad \text{for distinct pairs} \ u,v \in [1,p].
\end{align}
Putting together Eqs.~\ref{parallel_couplings}, \ref{r_1_eqn}, and \ref{s_1_eqn}, it shows that for any distinct pairs $u,v \in [1,p]$, all the external couplings are equal, i.e. $J_{u_1v_1} = J_{u_1v_2} = J_{u_2v_1} = J_{u_2v_2}$.
Therefore, Eq.~(\ref{almost_final_eigen_eqn}) can be simplified to
\begin{align}
& \lambda = \left(\sum_{1 \leq r \leq w} -J_{r_1r_2} + \sum_{2w+1 \leq t \leq n} J_{r_1t} \right) + 2\left(\sum_{1 \leq r \leq w} \sum_{s > r} J_{r_1s_1} \right) + \left( \sum_{2w+1 \leq t_1 < t_2 \leq n} J_{t_1t_2} \right) \nonumber \\
& \lambda = \left(\sum_{1 \leq r \leq w} \ \sum_{2w+1 \leq t \leq n} J_{r_1t} \right) + 2\left(\sum_{1 \leq r \leq w} \sum_{s > r} J_{r_1s_1} \right) + \left( \sum_{2w+1 \leq t_1 < t_2 \leq n} J_{t_1t_2} \right) - \left(\sum_{1 \leq r \leq w} J_{r_1r_2}\right) \nonumber \\
& \lambda = \left(\sum_{1 \leq r \leq w} \ \sum_{w+1 \leq t \leq p} J_{r_1t_1} + J_{r_1t_2} \right) + 2\left(\sum_{1 \leq r \leq w} \sum_{s > r} J_{r_1s_1} \right) + \left( \sum_{w+1 \leq m<n \leq p} J_{m_1n_1}+J_{m_1n_2}+J_{m_2n_1}+J_{m_2n_2}\right) \nonumber \\
& \quad \quad \quad \quad + \left(\sum_{w+1 \leq m \leq p} J_{m_1m_2} \right) - \left(\sum_{1 \leq r \leq w} J_{r_1r_2}\right) \nonumber \\
\lambda & = 2\left(\sum_{1 \leq r \leq w} \ \sum_{w+1 \leq t \leq p} J_{r_1t_1} \right) + 2\left(\sum_{1 \leq r \leq 1} \sum_{s > r} J_{r_1s_1} \right) + 4\left(\sum_{w+1 \leq m<n \leq p} J_{m_1n_1}\right)
+ \left(\sum_{w+1 \leq m \leq p} J_{m_1m_2} \right) \nonumber \\
& \qquad - \left(\sum_{1 \leq r \leq w} J_{r_1r_2}\right) \label{final_eigenvalue_eqn}
\end{align}

\section{Implications of couplings only depending on the Hamming weight} \label{Hamming_weight_dependence}
\begin{Thm} \label{Thm_Hamming_weight}
For any $p$-bit ($p \geq 2$) string $x$ encoded into the logical state $\ket{x_L}$ (given by Eq.~(\ref{encoded_state})), if $\ket{x_L}$ is an eigenstate of $J^2$ (given by Eq.~(\ref{modified_J2})) such that the eigenvalue $\lambda_x$ depends only on the Hamming weight of the string $x$, then the internal (respectively external) couplings of any distinct qubit pairs $u,v \in [1,p]$ are equal to the internal (respectively external) couplings of any other distinct qubit pairs $u^\prime,v^\prime \in [1,p]$.
\end{Thm}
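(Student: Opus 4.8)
The plan is to read off the consequences of the hypothesis directly from the closed form for $\lambda_x$. Since $\ket{x_L}$ is an eigenstate of $J^2$, the first conclusion following Eq.~(\ref{eigen_equation}) lets us speak of \emph{the} external coupling between two distinct pairs $u,v\in[1,p]$, which I write $J_{uv}:=J_{u_1v_1}$, and I write $J_u:=J_{u_1u_2}$ for the internal coupling of pair $u$; with this notation Eq.~(\ref{final_eigenvalue_eqn2}) reads
\begin{equation*}
\lambda_x = 2\sum_{r\in C_1^x}\sum_{t\in C_0^x} J_{rt} \;+\; 2\sum_{\substack{r,s\in C_1^x\\ r<s}} J_{rs} \;+\; 4\sum_{\substack{m,n\in C_0^x\\ m<n}} J_{mn} \;+\; \sum_{m\in C_0^x} J_m \;-\; \sum_{r\in C_1^x} J_r .
\end{equation*}
The theorem then amounts to showing that, if $\lambda_x$ is a function of $w:=|C_1^x|$ alone, all the $J_u$ coincide and all the $J_{uv}$ coincide.

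First I would fix the elementary move: given a string $x$ of weight $w$, choose a pair $a$ with $x_a=1$ and a pair $b$ with $x_b=0$, and let $x'$ be obtained from $x$ by swapping those two bits, so $C_1^{x'}=(C_1^x\setminus\{a\})\cup\{b\}$ and $C_0^{x'}=(C_0^x\setminus\{b\})\cup\{a\}$. Since $x'$ has Hamming weight $w$ as well, the hypothesis gives $\lambda_{x'}=\lambda_x$. Substituting into the formula above and cancelling the many terms common to $\lambda_x$ and $\lambda_{x'}$ — using $J_{ij}=J_{ji}$, every coupling internal to $C_1^x\setminus\{a\}$, internal to $C_0^x\setminus\{b\}$, or between those two sets drops out, and the contributions pairing $a$ or $b$ against $C_1^x\setminus\{a\}$ cancel in pairs — I expect to be left with the single identity
\begin{equation*}
0 \;=\; \lambda_{x'}-\lambda_x \;=\; 2\sum_{m\in C_0^x\setminus\{b\}}\bigl(J_{am}-J_{bm}\bigr) \;+\; 2\bigl(J_a-J_b\bigr).
\end{equation*}
Carrying out and checking this cancellation is the one genuinely computational step.

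Given this identity the rest is specialization. Taking $x$ of weight $p-1$ makes $C_0^x=\{b\}$ a singleton, so the sum is empty and the identity collapses to $J_a=J_b$; as $a$ and $b$ range over all distinct pairs, every internal coupling equals a common value $\iota$. Plugging $J_a=J_b=\iota$ back in, the identity becomes $\sum_{m\in C_0^x\setminus\{b\}}(J_{am}-J_{bm})=0$; now taking $x$ of weight $p-2$ makes $C_0^x\setminus\{b\}=\{m\}$ a singleton, giving $J_{am}=J_{bm}$ for all pairwise-distinct $a,b,m$. Thus $J_{am}$ is independent of the index $a\ne m$, and a two-link chain ($J_{uv}=J_{u'v}$ holding $v$ fixed, then $J_{u'v}=J_{u'v'}$ holding $u'$ fixed) equates any two external couplings.

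The only real obstacle is the bookkeeping in the cancellation leading to the second displayed identity; it is purely mechanical but must be done carefully, tracking the coefficients $2$ and $4$ on the various sums and the signs on the two internal-coupling sums. Beyond that one should confirm the small cases: for $p=2$ there is a single external coupling and the weight-$(p-1)=1$ specialization already forces internal equality; for $p=3$ any two external couplings share a pair, so the intermediate link in the chaining argument is unnecessary; and for every $p\ge 2$ the argument goes through, as the statement requires.
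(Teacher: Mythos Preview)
Your proof is correct and follows essentially the same approach as the paper's: both derive the swap identity $\lambda_{x'}-\lambda_x = 2\sum_{m\in C_0^x\setminus\{b\}}(J_{am}-J_{bm}) + 2(J_a-J_b)$ and then specialize the weight of $x$ to isolate individual couplings. The only cosmetic difference is that you specialize to weights $p-1$ and $p-2$ (so the sum is empty or a singleton and each conclusion drops out immediately), whereas the paper specializes to weights $1$ and $2$ and subtracts the two resulting equations; your choice is marginally cleaner, but the content is the same.
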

\begin{proof}
Consider two strings $x$ and $y$ of the same Hamming weight $w$ such that the Hamming distance between them is $2$. Assume that the $x_u= 1$ and $x_v = 0$ in $x$, and $y_u= 0$ and $y_v = 1$ in $y$. Let $C_1^x$ and $C_0^x$ denote the sets of pair indices which are set to states $\ket{\psi_1}$ and $\ket{00}$ respectively for the string $x$, and similarly define $C_1^y$ and $C_0^y$ for the string $y$. Notice that the sets $C_1^x - \{u\}$ and $C_0^x - \{v\}$ are identical with the sets $C_1^y - \{v\}$ and $C_0^y - \{u\}$ respectively. We now write Eq.~(\ref{final_eigenvalue_eqn}) for the string $x$ as below:
\begin{align}
\lambda_w^x & = 2\left(\sum_{\substack{r \in C_1^x \\ r \neq u }} \ \sum_{\substack{ t \in C_0^x \\ t \neq v}} J_{r_1t_1} \right) +
2\left(\sum_{\substack{ t \in C_0^x \\ t \neq v}} J_{u_1t_1} \right) + 
2\left(\sum_{\substack{r \in C_1^x \\ r \neq u }} J_{r_1v_1} \right) + 
2J_{u_1v_1} + 2\left(\sum_{\substack{r,s \in C_1^x \\ r < s}} J_{r_1s_1} \right)\nonumber \\
& + 4\left(\sum_{\substack{m, n \in C_0^x \\ m<n}} J_{m_1n_1}\right) +  \left(\sum_{\substack{m \in C_0^x}} J_{m_1m_2} \right) 
- \left(\sum_{r \in C_1^x} J_{r_1r_2}\right) \label{x_eqn}
\end{align}
Similarly, for the string $y$, we get
\begin{align}
\lambda_w^y & = 2\left(\sum_{\substack{r \in C_1^y \\ r \neq v }} \ \sum_{\substack{ t \in C_0^y \\ t \neq u}} J_{r_1t_1} \right) +
2\left(\sum_{\substack{ t \in C_0^y \\ t \neq u}} J_{v_1t_1} \right) + 
2\left(\sum_{\substack{r \in C_1^y \\ r \neq v }} J_{r_1u_1} \right) + 
2J_{v_1u_1} + 2\left(\sum_{\substack{r,s \in C_1^y \\ r < s}} J_{r_1s_1} \right)\nonumber \\
& + 4\left(\sum_{\substack{m, n \in C_0^y \\ m<n}} J_{m_1n_1}\right) +  \left(\sum_{\substack{m \in C_0^y}} J_{m_1m_2} \right) 
- \left(\sum_{r \in C_1^y} J_{r_1r_2}\right) \label{y_eqn}
\end{align}
Subtracting Eq.~(\ref{y_eqn}) from Eq.~(\ref{x_eqn}), we get
\begin{align}
\lambda_w^x - \lambda_w^y & = 2\left(\sum_{\substack{ t \in C_0^x \\ t \neq v}} J_{u_1t_1} \right) - 2\left(\sum_{\substack{ t \in C_0^y \\ t \neq u}} J_{v_1t_1} \right) +
2\left(\sum_{\substack{r \in C_1^x \\ r \neq u }} J_{r_1v_1} \right) -
2\left(\sum_{\substack{r \in C_1^y \\ r \neq v }} J_{r_1u_1} \right) \nonumber \\
& + 2\left(\sum_{\substack{s \in C_1^x \\ s \neq u}} J_{u_1s_1} \right) - 2\left(\sum_{\substack{s \in C_1^y \\ s \neq v}} J_{v_1s_1} \right)
+ 4\left(\sum_{\substack{n \in C_0^x \\ n \neq v}} J_{v_1n_1}\right) - 4\left(\sum_{\substack{n \in C_0^y \\ n \neq u}} J_{u_1n_1}\right)
+ 2(J_{v_1v_2} - J_{u_1u_2}) \nonumber \\
& = 2\left(\sum_{\substack{ z \in C_0^x \cup C_1^x \\ z \neq u}} J_{u_1z_1} \right) - 2\left(\sum_{\substack{ z \in C_0^y \cup C_1^y \\ z \neq v}} J_{v_1z_1} \right) + 2\left(\sum_{\substack{z \in C_0^x \cup C_1^x \\ z \neq v}} J_{v_1z_1}\right) - 2\left(\sum_{\substack{z \in C_0^y \cup C_1^y \\ z \neq u}} J_{u_1z_1} \right) \nonumber \\
& + 2\left(\sum_{\substack{n \in C_0^x \\ n \neq v}} J_{v_1n_1}\right) - 2\left(\sum_{\substack{n \in C_0^y \\ n \neq u}} J_{u_1n_1}\right) + 2(J_{v_1v_2} - J_{u_1u_2}) \nonumber \\
& = 2\left(\sum_{\substack{n \in C_0^x \\ n \neq v}} J_{v_1n_1}\right) - 2\left(\sum_{\substack{n \in C_0^y \\ n \neq u}} J_{u_1n_1}\right) +  2(J_{v_1v_2} - J_{u_1u_2}) \label{uv_eqn}
\end{align}

If we consider the strings $x^\prime$ and $y^\prime$ both of Hamming weight $1$ such that $C_1^{x^\prime} = \{ u \}$ and $C_1^{y^\prime} = \{ v \}$, then from Eq.~(\ref{uv_eqn}), we get
\begin{align}
\lambda_1^{x^\prime} - \lambda_1^{y^\prime} = 2\left(\sum_{\substack{n \neq \{u,v\}}} J_{v_1n_1}\right) - 2\left(\sum_{\substack{n \neq \{u,v\}}} J_{u_1n_1}\right) +  2(J_{v_1v_2} - J_{u_1u_2}) \label{x_prime_y_prime_eqn}
\end{align}
If we consider the strings $x^{\prime\prime}$ and $y^{\prime\prime}$ both of Hamming weight $2$ such that $C_1^{x^{\prime\prime}} = 
\{ u, t \}$ and $C_1^{y^{\prime\prime}} = 
\{ v, t \}$, then from Eq.~(\ref{uv_eqn}), we get
\begin{align}
\lambda_2^{x^{\prime\prime}} - \lambda_2^{y^{\prime\prime}} = 2\left(\sum_{\substack{n \neq \{u,v,t\}}} J_{v_1n_1}\right) - 2\left(\sum_{\substack{n \neq \{u,v,t\}}} J_{u_1n_1}\right) +  2(J_{v_1v_2} - J_{u_1u_2}) \label{xy_double_prime}
\end{align}
If we assume that the eigenvalues only depend on the Hamming weight of the input string, then from Eqs.~\ref{x_prime_y_prime_eqn} and \ref{xy_double_prime}, we get
\begin{align}
& \left(\sum_{\substack{n \neq \{u,v\}}} J_{v_1n_1}\right) - \left(\sum_{\substack{n \neq \{u,v\}}} J_{u_1n_1}\right) = \left(\sum_{\substack{n \neq \{u,v,t\}}} J_{v_1n_1}\right) - \left(\sum_{\substack{n \neq \{u,v,t\}}} J_{u_1n_1}\right) \nonumber \\
& \quad J_{v_1t_1} = J_{u_1t_1}.  \label{couplings_equality}
\end{align}
Since the choice of the pair indices $u,v$ and $t$ was completely arbitrary, we conclude that Eq.~(\ref{couplings_equality}) holds true for any three distinct pairs, which then implies that all the external couplings are indeed equal. As a result, we conclude from Eq.~(\ref{uv_eqn}) that $J_{v_1v_2} = J_{u_1u_2}$for any two arbitrary distinct pairs $u$ and $v$. Therefore, all the internal couplings are also equal.
\end{proof}

\end{document}